\definecolor{alizarin}{rgb}{0.82, 0.1, 0.26}
\newcommand{\crc}{{\rm crc}}
\newcommand{\aug}{{\rm aug}}
\newcommand{\thr}{{\rm thr}}
\newcommand{\ch}{{\rm ch}}
\newcommand{\rem}[1]{}
\newtheorem{proposition}{Proposition}
\newcommand{\bre}{\begin{equation}}
\newcommand{\ere}{\end{equation}}
\newcommand{\be}{{\bf {e}}}
\newcommand{\ee}\]
\newcommand{\bra}{\begin{eqnarray}}
\newcommand{\era}{\end{eqnarray}}
\newcommand{\bfg}{\begin{figure}[hbtp]}
\newcommand{\efg}{\end{figure}}
\newcommand{\bit}{\begin{itemize}}
\newcommand{\eit}{\end{itemize}}
\newcommand{\ben}{\begin{enumerate}}
\newcommand{\een}{\end{enumerate}}
\newcommand{\bA}{{\bf A}}
\newcommand{\bB}{{\bf B}}
\newcommand{\bI}{{\bf I}}
\newcommand{\bG}{{\bf{G}}}
\newcommand{\bF}{{\bf F}}
\newcommand{\bc}{{\bf c}}
\newcommand{\bl}{{\boldsymbol\ell}}
\newcommand{\bR}{{\bf R}}
\newcommand{\norm}[1]{\|#1\|}
\newcommand{\baa}{\begin{eqnarray*}}
\newcommand{\eaa}{\end{eqnarray*}}
\newcommand{\bs}{{\bf s}}
\newcommand{\br}{{\bf r}}
\newcommand{\ba}{{\bf a}}
\newcommand{\bhh}{{\bf h}}
\newcommand{\bh}{{\bf h}}
\newcommand{\bH}{{\bf H}}
\newcommand{\bL}{{\bf L}}
\newcommand{\bu}{{\bf u}}
\newcommand{\bd}{{\bf d}}
\newcommand{\bD}{{\bf D}}
\newcommand{\bn}{{\bf n}}
\newcommand{\bx}{{\bf x}}
\newcommand{\by}{{\bf y}}
\newcommand{\cA}{{\cal A}}
\newcommand{\cP}{{\cal P}}
\newcommand{\cL}{{\cal L}}
\newcommand{\cK}{{\cal K}}
\newcommand{\cC}{{\mathcal{C}}}
\newcommand{\cN}{{\mathcal{N}}}
\newcommand{\defined}{\triangleq}
\def\argmax{\mathop{\rm argmax}}
\def\argmin{\mathop{\rm argmin}}
\def\defined{\: {\stackrel{\scriptscriptstyle \Delta}{=}} \: }
\newfont{\boldlarge}{msbm10 scaled 1100}
\newcommand{\comment}[1]{}
\newlength{\tmpbigbar}
\begin{document}
%
\title{Efficient Belief Propagation List Ordered Statistics Decoding of Polar Codes}
%
%
%

\author{Yonatan~Urman,~\IEEEmembership{Student~Member,~IEEE,}
        Guy~Mogilevsky,
        and~David~Burshtein,~\IEEEmembership{Senior~Member,~IEEE}
\thanks{This research was supported by the Israel Science Foundation (grant no. 1868/18).}%
\thanks{Y. Urman, G. Mogilevsky and D. Burshtein are with school of Electrical Engineering, Tel-Aviv University, Tel-Aviv 6997801, Israel (email: yonatanurman@mail.tau.ac.il, guym1@mail.tau.ac.il, burstyn@eng.tau.ac.il).}
\thanks{{The material in this paper will be presented in part in the International Symposium on Information Theory (ISIT), Melbourne, July 2021.}}}

\maketitle

\begin{abstract}
New algorithms for efficient decoding of polar codes (which may be CRC-augmented), transmitted over either a binary erasure channel (BEC) or an additive white Gaussian noise channel (AWGNC), are presented.
We start by presenting a new efficient exact maximum likelihood decoding algorithm for the BEC based on inactivation decoding and analyze its computational complexity. This algorithm applies a matrix triangulation process on a sparse polar code parity check matrix, followed by solving a small size linear system over GF(2).
We then consider efficient decoding of polar codes, transmitted over the AWGNC. The algorithm applies CRC-aided belief propagation list (CBPL) decoding, followed by ordered statistics decoding (OSD) of low order.
Even when the reprocessing order of the OSD is as low as one, the new decoder is shown to significantly improve on plain CBPL.
To implement the OSD efficiently, we adapt the matrix triangulation algorithm from the BEC case.
We also indicate how the decoding algorithms can be implemented in parallel for low latency decoding.
Numerical simulations are used to evaluate the performance and computational complexity of the new algorithms.
\end{abstract}

\begin{IEEEkeywords}
Polar codes, belief propagation, ordered statistics decoding.
\end{IEEEkeywords}

%
\IEEEpeerreviewmaketitle

\section{Introduction} \label{sec:intro}
%
%
%
%
\IEEEPARstart{P}{olar} codes \cite{PolarCodes}, decoded by the low complexity successive cancellation (SC) decoder, provably achieve channel capacity for a wide range of channels. Nevertheless, their finite length performance is typically inferior to that of low-density parity-check (LDPC) codes or Turbo codes.
The error rate performance of a polar code with a short to moderate blocklength can be significantly improved by concatenating it with a high rate
cyclic redundancy check (CRC) code, and using a CRC-aided SC list (SCL) decoder~\cite{SCL}.
However, both the SC and SCL decoders are sequential and thus suffer from high decoding latency and limited throughput.
Improvements to SC and SCL were proposed by various authors, e.g., \cite{sarkis2014fast,hashemi2018decoder}.

An iterative belief propagation (BP) decoder over the polar code factor graph (FG) was proposed in \cite{Arkan2010PolarC, polar_vs_reed} for lower latency and higher throughput. This decoder is a soft-input, soft-output decoder, such that it can be concatenated with other soft-input decoders. In addition, it is inherently parallel and allows for efficient, high throughput implementation, optimizations and extensions \cite{eslami2010on,BP_arc,Polar_LDPC_conc,bp_early_term,crc_early_term,PCForChannelSrc,PolarBPCRCWarren,BPPermuted,BPL,BPPermutedWarren,yu2019belief,PolarBPCRCBrink}.
As was mentioned above, polar codes can benefit from an outer CRC code. In \cite{crc_early_term} it was proposed to use the CRC as an early termination criterion of a BP polar code decoder. Later, the authors of \cite{PolarBPCRCWarren} proposed concatenating the CRC FG to the polar code FG, and applied message-passing decoding on the combined FG. This yields the CRC-aided BP (CBP) decoder. In \cite{PCForChannelSrc, BPPermuted, BPPermutedWarren} it was noticed that it is possible to permute the order of the stages of the polar code FG without changing the code, such that BP decoding can be implemented on any one of the permutations. The authors of \cite{BPL} proposed the BP list (BPL) decoder, which uses several independent BP decoders in parallel. Each decoder applies BP decoding on a different permutation and the final decoded codeword is chosen from the output of the decoder which has the minimal $\cL_2$ distance to the received signal. In \cite{PolarBPCRCBrink}, the authors suggested the CRC-aided BPL (CBPL) algorithm that incorporates CRC information in each of the BP decoders of the BPL algorithm.
A different approach to BP decoding of polar codes \cite{SparseGraphsBPPolar} uses a sparse polar parity check matrix (PCM), unlike the standard polar PCM which is dense \cite[Lemma 1]{LP_Polar_decoding}. The sparse PCM contains hidden variables in addition to the codeword variables. Due to the sparsity, standard BP decoding can be applied on the FG corresponding to this PCM, as for LDPC codes.

Another family of decoders for polar codes are based on ordered statistics decoding (OSD) \cite{OSD} and its box and match variant \cite{valembois2004box}. This approach was used e.g., in \cite{OSDPolar16}, and in \cite{trifonov2012efficient,goldin2019performance} for polar and polar-like codes when viewed as a concatenated code.
However, the required reprocessing order is typically large. Even though there are known methods for improving the complexity-performance trade-off of OSD, e.g. \cite{valembois2004box}, the computational complexity may still be prohibitive due to a large reprocessing order, or to the Gaussian elimination required.
In the context of OSD decoding of LDPC codes it was suggested \cite{BPOSD} to combine BP and OSD decoding by using the soft decoded BP output to rank the codebits rather than using the uncoded information as in plain OSD.

Although much progress has been made in developing efficient belief propagation decoding-based algorithms for polar codes, these algorithms still have a higher error rate compared to the CRC-aided SCL decoder.
In this work, we present the following contributions.
\begin{enumerate}
	\item
	In the first part of this work, which was initially presented in \cite{polar_ML_BEC_ISIT},
	we consider a polar code (possibly concatenated to a CRC code), transmitted over the binary erasure channel (BEC). We derive a new low complexity algorithm for computing the exact maximum-likelihood (ML) codeword based on inactivation decoding \cite{pishro2004on,LDPC_ML,shokrollahi2006systems,eslami2010on,lazaro2017inactivation,cocskun2020successive} (see also \cite[Chapter 2.6]{algebraic_coding_theory}).
	In \cite{eslami2010on}, it was proposed to use Algorithm C of \cite{pishro2004on} for improved BP decoding of polar codes over the BEC.
	The differences between our decoder for the BEC and these works are as follows.
	First, rather than using the large standard polar code FG in \cite{eslami2010on}, we use the method in \cite{SparseGraphsBPPolar} for constructing (offline) a much smaller sparse PCM for polar codes, thus reducing the computational complexity.
	In addition, we show how to extend our method to CRC-augmented polar codes, which are essential for high performance at short to moderate blocklengths.
	We also analyze the computational complexity of our method. For large $N$ the computational complexity is $\mathcal{O}(N\log N$).
	\item
	We consider the case where the (possibly CRC concatenated) polar code is transmitted over the additive white Gaussian noise channel (AWGNC), and suggest an efficient implementation of a CBPL decoder followed by OSD.
	Our method adapts the approach in \cite{BPOSD} to CBPL decoding of polar codes. We apply OSD reprocessing using the soft decoding output of each of the parallel CBP decoders in CBPL. Even when the reprocessing order of the OSD is as low as one, the new decoder is shown to significantly improve on CBPL. For computationally efficient reprocessing of order two (or higher) we suggest partial reprocessing.
	We use an efficient implementation of the Gaussian elimination required by OSD by adapting the efficient ML decoder for the BEC, presented in the first part of the work. We then suggest a variant, with an even lower computational complexity, that saves part of the computational complexity of the efficient Gaussian elimination algorithm. 
\end{enumerate}

The rest of this work is organized as follows.
In Section \ref{sec:background} we provide some background. In Section \ref{sec:EffML}, we describe the proposed algorithm for ML decoding of polar codes over the BEC, analyze its complexity and present simulation results. In Section \ref{sec:OSD} we introduce the CBPL-OSD polar decoder and propose an efficient implementation. We further present simulation results that show the improved error rate performance. Finally, Section \ref{sec:conclusions} concludes the paper.

\section{Background} \label{sec:background}
\subsection{Polar Codes} \label{sec:background_polar}
We use the following notations. We denote matrices by bold-face capital letters and column vectors by bold-face lower case letters. We denote by $\cP\left(N, K\right)$ a polar code of blocklength $N$, information size $K$, and rate $K/N$.
The length $K$ information set and length $N-K$ frozen set are denoted by $\cA$ and $\bar{\cA}$ respectively. We denote by $\tilde{\bu}$, the information bits vector of length $K$, and by $\bu$ the information word, such that $\bu_{\cA} = \tilde{\bu}$ and $\bu_{\bar{\cA}} = 0$. That is, $\bu$ is the full input (column) vector, including the frozen bits, and we assume that the frozen bits are set to zero. We denote the $N\times N$ polar full generator matrix by $\bG_{N} = \bB_{N}\bF^{\otimes n}$ \cite{PolarCodes}, where $\bB_{N}$ is the bit reversal permutation matrix and
$
\bF= \begin{scriptsize}
\begin{bmatrix}
1 & 0 \\
1 & 1
\end{bmatrix}
\end{scriptsize}
$.
The codeword is then generated using $\bc^T=\bu^T\bG_{N}$.
Denote the matrix composed of the rows of $\bG_N$ with indices in $\cA$ by $\bG_N(\cA)$. Then $\bG_N(\cA)$ is the generator matrix of the code, and $\bc^T = \bu_{\cA}^T \bG_N(\cA)$.
The standard polar code PCM can be constructed from $\bG_N$ by taking the columns that correspond to the frozen indices ($\bar{\cA}$) \cite[Lemma 1]{LP_Polar_decoding}.

The error rate performance of polar codes can be significantly improved by concatenating it with a CRC code.
Consider a CRC code which appends $r$ CRC bits to binary vectors of length $m$. This code can be represented as a linear binary block code with code rate $R_{\crc} = m / (m+r)$, a systematic generator matrix $\bG_{\crc}$ of dimensions $m \times (m+r)$, and a PCM $\bH_{\crc}$ with $r$ parity check constraints.
For a polar code $\cP(N,K)$ and a CRC code with dimension $m$ and blocklength $K=m+r$, the corresponding \textit{CRC-augmented polar code} is the linear binary block code of length $N$, dimension $m$ and generator matrix $\bG_{\aug} = \bG_{\crc} \bG_N(\cA)$. Note that the overall code rate is $R = m / N$.

\subsection{CRC-aided belief propagation list decoding} \label{sec:background_CBPL}
As was discussed in Section \ref{sec:intro}, for reduced latency high throughput decoding, a BP decoder acting on the polar code FG, shown in Fig. \ref{fig:PolarFG}, can be used \cite{Arkan2010PolarC, polar_vs_reed}.
\begin{figure}
	\centering
	\includegraphics[width=0.3\linewidth]{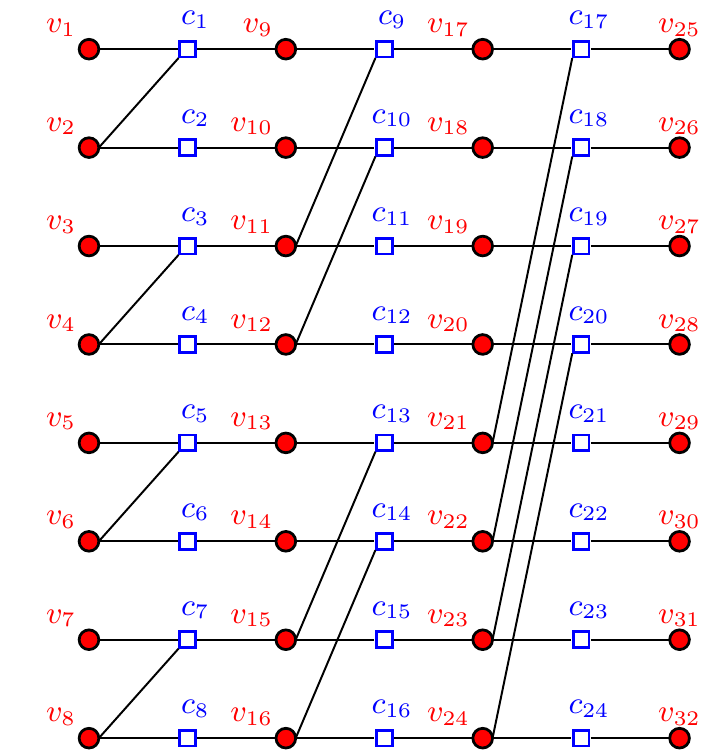}
	\caption{Polar code factor graph for n=3.}
	\label{fig:PolarFG}
\end{figure}
The polar code standard FG consists of $n=\log_2 N$ stages of parity check (PC) nodes, and $n+1$ stages of variable nodes. The variable nodes in the leftmost stage of the graph (denoted in Fig. \ref{fig:PolarFG} by red numbers $1-8$) correspond to the information word $\bu$, and the nodes on the rightmost stage of the graph correspond to the codeword $\bc$ (marked by red numbers $25-32$).
The BP decoder is iterative, where each iteration consists of a full left, followed by a full right message propagation over the FG. The information is propagated through the graph, such that a full iteration consumes $2\cdot n$ time steps.
The BP update equations are described in \cite{Arkan2010PolarC, polar_vs_reed}.
There are three types of variable nodes in the FG: Channel variable nodes (CVN), corresponding to the codeword, frozen variable nodes (FVN), and the rest are hidden variable nodes (HVN).
As an alternative to the BP equations, one can apply some variant of the min-sum algorithm, which is more convenient to apply in practice, although for the BEC, min-sum coincides with BP.

To reduce the error-rate, it was suggested to perform the message passing on the concatenation of the CRC FG, represented by the PCM $\bH_{\crc}$, and the polar FG.
This means that we add the CRC FG to the left of the polar FG shown in Fig. \ref{fig:PolarFG} such that it connects to the non-frozen variable nodes on the left side of the graph (the non-frozen variable nodes among $v_1$-$v_8$). We use the concatenated graph when applying the BP. 
As stated in \cite{PolarBPCRCWarren}, for this concatenation to be beneficial in terms of error-rate performance, the messages must first evolve through $I_{\thr}$ iterations on the polar FG alone, so that the information layer LLRs will become reliable enough. This results in the CBP decoder \cite{PolarBPCRCWarren}.
The BP iterations are performed until either a predetermined maximum number of iterations $I_{\max}$ have been reached, or until a certain stopping condition has been met.
Following \cite{bp_early_term,crc_early_term}, in our implementation we stop iterating if the following two conditions are satisfied. The first condition is
$\hat{\bc}^T=\hat{\bu}^T\cdot\bG_N$
where $\hat{\bu}$ and $\hat{\bc}$ are the hard-decisions of the polar information and codeword bits, respectively.
The second condition is that $\hat{\bu}_{\cA}$ satisfies the CRC constraints ($\bH_{\crc}\hat{\bu}_{\cA}=\mathbf{0}$).

To further reduce the error-rate, we can perform CBP on a list of $L$ layer-permuted polar FGs \cite{PCForChannelSrc,BPPermuted,BPL}, thus obtaining the CBPL decoder \cite{PolarBPCRCBrink}.
The estimated codeword is the CBP output $\hat{\bx}$ with BPSK representation closest, in Euclidean distance, to the channel output, $\by$, out of all the outputs that are valid codewords (if there are no CBP outputs $\hat{\bx}$ which are valid codewords, we choose that CBP output $\hat{\bx}$ which is closest to the channel output, $\by$).
To simplify the implementation, instead of using different polar FGs in the CBP realizations, we may permute their inputs and outputs \cite{BPPermutedWarren}.

\subsection{Sparse parity check matrix for polar codes}
We briefly review the method \cite{SparseGraphsBPPolar} for obtaining a sparse representation of the polar code PCM. It starts with the standard polar FG (Fig. \ref{fig:PolarFG}), which can be represented as a PCM of size $N\log_2 N\times N(1+\log_2 N)$, i.e., $N\log_2 N$ PC nodes, and $N(1+\log_2 N)$ variable nodes (out of which, $N-K$ variable nodes are frozen). The leftmost layer of variable nodes of the graph corresponds to columns $1$ to $N$ of the PCM, the next layer of variable nodes corresponds to columns $N+1$ to $2N$, and so on, such that the rightmost variable nodes (CVNs) correspond to the last $N$ columns of the PCM. The column index of variable node $v_i$ in Fig. \ref{fig:PolarFG} is $i$. The resulting PCM is sparse since each PC is connected to at most $3$ variable nodes, as shown in Fig. \ref{fig:PolarFG}. Hence, standard BP can be used effectively on this graph with good performance. Unfortunately, the dimensions of the resulting PCM are large (instead of $(N-K)\times N$ for the standard polar code PCM, we now have a matrix of size $N\log_2 N \times N(1+\log_2 N)$ as it contains HVNs as well), which increases the decoding complexity. To reduce the matrix size, the authors of \cite{SparseGraphsBPPolar} suggested the following pruning steps that yield a valid sparse PCM to the code, while reducing the size of the $N\log_2 N\times N(1+\log_2 N)$ original PCM significantly:
\begin{enumerate}
	\item \textbf{FVN removal}: If a variable is frozen then it is equal to zero. Thus, all columns that correspond to frozen nodes can be removed from the PCM.
	\item \textbf{Check nodes of degree 1}: The standard polar FG contains check nodes of degrees 2 and 3 only. However, after applying the pruning algorithm, we might get check nodes with degree 1. Their neighboring variable node must be 0. Thus, this check node and its single neighbor variable node can be removed from the FG (the column corresponding to that variable node is removed from the PCM).
	\item \textbf{A CVN connected to a degree 2 check node with an HVN}: The CVN must be equal to the HVN. Thus, the connecting check node can be removed, and the HVN can be replaced (merged) with the CVN.
	\item \textbf{HVN of degree 1}: This HVN does not contribute to the decoding of the other variables since its value is unknown. Hence, the check node connected to this HVN does not carry information, so that both of them can be removed from the graph.
	\item \textbf{HVN of degree 2}: This HVN can be removed and the two connected check nodes can be merged.
	\item \textbf{Degree 2 check node that is connected to two HVNs}: The two HVNs must be equal. Hence, the check node can be removed and the two HVNs can be merged.
\end{enumerate}
Iterating over the above mentioned pruning steps until convergence results in the pruned PCM (FG), which is a valid PCM for the code. As mentioned before, the last $N$ variable nodes of the resulting PCM correspond to the codeword bits (CVNs), while the rest are HVNs. Now, standard BP decoding can be used on the new FG. It is worth mentioning that although the pruning steps keep the validity of the PCM for the code, they do change the message passing scheduling due to the removal and merging of nodes. However, in the BEC case the result of BP decoding over a FG is invariant to the scheduling used, so there is no change in the decoding result when using the pruned FG instead of the original FG.
The following proposition will be useful later on.
\begin{proposition}
	The pruned PCM has full row rank.
	\label{prop:pruned_PCM_full_rank}
\end{proposition}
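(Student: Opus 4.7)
The plan is to establish two facts about the pruned PCM $\bH$, writing $c$ for its number of columns and $r$ for its number of rows: (i) $c-r=K$, and (ii) $\dim\ker(\bH)=K$. By rank--nullity these together imply $\rank(\bH)=c-K=r$, i.e.\ full row rank.

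For (i), I would track how $c-r$ evolves across pruning. The original FG PCM has $c_0=N(1+\log_2 N)$ columns and $r_0=N\log_2 N$ rows, so $c_0-r_0=N$. FVN removal (step 1) deletes $N-K$ columns and no rows, reducing $c-r$ to $K$. Each of the remaining steps removes exactly one row and one column once merges are counted properly: steps 2 and 4 each delete one check together with one variable; step 3 deletes one check and merges the HVN into the CVN (net one column lost); step 5 deletes one HVN column and merges two checks into one (net one row lost); step 6 deletes one check and merges two HVN columns (net one column lost). Hence $c-r=K$ persists throughout.

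For (ii), I would prove by induction on the pruning steps the invariant that, for every codeword, the values of the surviving HVNs are uniquely determined by the current checks. The invariant holds at the outset because $\bG_N$ is invertible: a codeword $\bc$ recovers $\bu$ uniquely via $\bu^T=\bc^T\bG_N^{-1}$, and the intermediate layers are then fixed by left-to-right propagation. Each pruning step preserves the invariant: steps 1, 3, and 6 eliminate variables whose values are explicitly forced (frozen, equal to a neighbouring CVN, equal to another HVN); steps 2, 4, and 5 remove an HVN together with the check(s) that pin it down and leave the constraints on the remaining variables unchanged---in step 4 the degree-$1$ check imposes no constraint on any variable other than the HVN being deleted, and in step 5 the merged check is exactly the residue of the two original checks after substituting out the eliminated HVN. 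The invariant produces a bijection between codewords and kernel elements, so $\dim\ker(\bH)=K$.

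The main obstacle is the careful verification of the invariant at steps 4 and 5, where one must confirm that the constraints silently erased when an HVN is dropped really are captured elsewhere (in step 5 by the merged check, and in step 4 by the very fact that the eliminated HVN was unconstrained in any other check). Once (i) and (ii) are in place, the rank--nullity computation $\rank(\bH)=c-\dim\ker(\bH)=c-K=r$ concludes the proof.
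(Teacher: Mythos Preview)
Your argument is correct and mirrors the paper's: both verify that each pruning step after FVN removal deletes exactly one row per column removed (so the pruned PCM is $(N'-K)\times N'$), and then combine this with $\dim\ker(\bH)=K$ and rank--nullity. The paper dispatches the kernel-dimension claim in one sentence (``$K$ is the dimension of the polar code''), whereas you supply the justification via the HVN-uniqueness invariant step by step; your version is thus slightly more thorough but follows the same route.
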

\begin{proof}
	As was described above, we start the pruning process with a PCM of size $N\log_2 N\times N(1+\log_2 N)$ with $N-K$ frozen variable nodes. After performing the first pruning step, FVN removal, the dimensions of the PCM are $N\log_2 N \times (N\log_2 N + K)$. When we proceed with the pruning, it can be verified, for each of the pruning steps summarized above, that for each variable node removed from the graph, exactly one check node is removed as well. Hence, at the end of the pruning process, the pruned PCM has dimensions $(N'-K) \times N'$, where $N'\ge N$ is the total number of variable nodes in the pruned graph. Now, $K$ is the dimension of the polar code. Hence the $N'-K$ rows of the pruned PCM must be linearly independent. That is, the pruned PCM has full row rank.
\end{proof}

As an example, for $\cP\left(256, 134\right)$ ($\cP\left(512, 262\right)$, respectively), the algorithm yields a pruned PCM with blocklengh $N'=355$ ($N'=773$) and the fraction of ones in the PCM is $0.7\%$ ($0.33\%$).

\section{Efficient ML Decoding over the BEC} \label{sec:EffML}
Consider the BEC, where a received symbol is either completely known or completely unknown (erased). We denote the erasure probability by $\epsilon$. The channel capacity is $C(\epsilon) = 1-\epsilon$.
Denote the input codeword by $\bc$, and the BEC output by $\by$. Since $\bc$ is a codeword it must satisfy $\bH\bc = \mathbf{0}$ where $\bH$ is a PCM of the code.
Denote by $\cK$ the set of known bits in $\bc$ (available from the channel output, $\by$), and by $\bar{\cK}$ the set of erasures.
Also denote by $\bH_{\cK}$ ($\bH_{\bar{\cK}}$, respectively) the matrix $\bH$ restricted to columns in $\cK$ ($\bar{\cK}$), so that
$
\mathbf{0} = \bH\bc = \bH_{\cK}\bc_{\cK} + \bH_{\bar{\cK}}\bc_{\bar{\cK}}
$. Hence,
\begin{equation}\label{eq:MLDecBEC}
\bH_{\bar{\cK}}\bc_{\bar{\cK}} = \bH_{\cK}\bc_{\cK} \: .
\end{equation}
As a result, it can be seen that ML decoding over the BEC is equivalent to solving the set \eqref{eq:MLDecBEC} for $\bc_{\bar{\cK}}$ \cite{modern_coding_theory}. However, the required complexity when using Gaussian elimination is $\mathcal{O}(N^3)$. We now present a much more efficient ML decoding algorithm. This algorithm uses the sparse representation of the polar PCM \cite{SparseGraphsBPPolar} that was reviewed in Section \ref{sec:background}, and is assumed to be obtained offline. Our algorithm is a modified version of the efficient ML decoder \cite{LDPC_ML}, that was proposed for efficient ML decoding of LDPC codes over the BEC.

Since we start with a sparse PCM (the pruned PCM), it may be convenient, for computational efficiency, to store this matrix by the locations of ones at each row and column (along with the number of ones). It may also be convenient to address the rows and columns using some permutation (although this may not be the preferred option for parallel implementation).
For clarity, we describe the algorithm over the PCM. However, for efficient implementation of some of the stages, some other representation may be preferable (e.g., using the FG for stage 1 described below).
The algorithm has the following stages:
\begin{enumerate}
	\item \textbf{Standard BP decoding}. Given the BEC output, apply standard BP decoding on the pruned PCM until convergence. If BP decoding was successful (all variable nodes were decoded) then we return the decoded word and exit. Otherwise, we proceed to the next decoding stage. The PCM after BP decoding is shown in Fig \ref{fig:PCMAfterBp}. In the end of the decoding we have permuted the rows and columns of the PCM such that the first columns correspond to the decoded variable nodes, and the first rows correspond to the decoded PCs (a PC node is said to be decoded if all its neighbor variable nodes have been decoded). We denote the number of decoded variable (PC, respectively) nodes at the end of this stage by $n_d$ ($n_c$). As can be seen in Fig. \ref{fig:PCMAfterBp}, the upper right part of the matrix is filled with zeros, since there are no undecoded variable nodes associated with decoded PCs.
	\begin{figure}
		\centering
			\hspace*{-2.7cm}\includegraphics[width=0.5\linewidth]{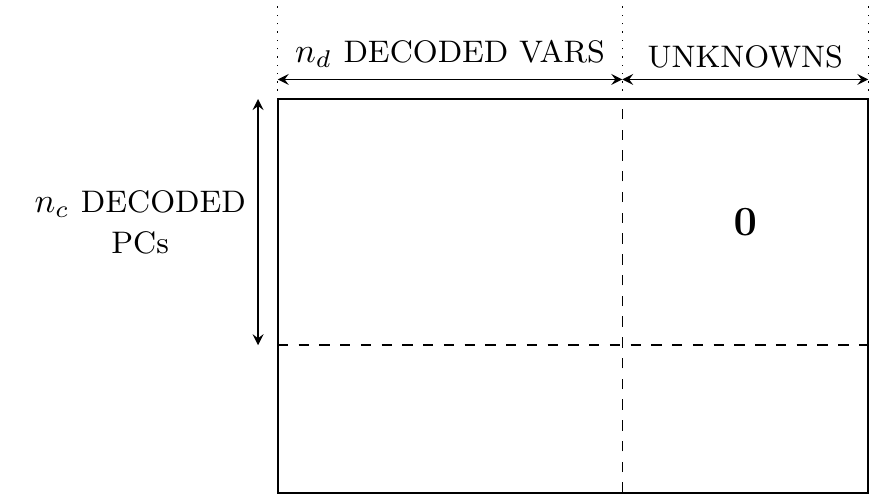}
			\caption{Reordered PCM after initial BP decoding.}
			\label{fig:PCMAfterBp}
	\end{figure}
	
	\item \textbf{Choosing reference variables and performing \emph{triangulation} \cite{EffEncLDPC}}. Consider the PCM at the output of the previous stage, shown in Fig. \ref{fig:PCMAfterBp}. Since the BP decoder has converged on this PCM, the number of unknown variable nodes in each undecoded row is larger than one (otherwise the BP could have continued decoding). The goal of the current stage is to bring the PCM to the form shown in Fig. \ref{fig:FinalPCM}, using only row and column permutations, where the $n_c \times n_d$ sub-matrix on the top left corner is the same as in Fig. \ref{fig:PCMAfterBp}, and where the sub-matrix $\bH^{(1,3)}$ is square lower triangular with ones on its diagonal.
	\begin{figure}
		\centering
		\hspace*{-3cm}\includegraphics[width=0.5\linewidth]{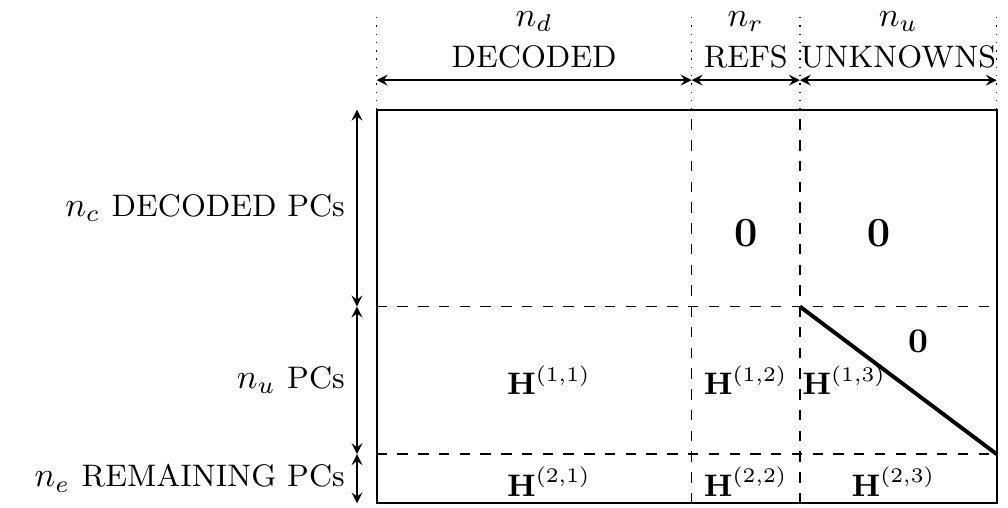}
		\caption{Final PCM. $H^{(1,3)}$ is a $n_u \times n_u$ lower triangular matrix, the bold diagonal is filled with ones.}
		\label{fig:FinalPCM}
	\end{figure}
	\par We start by considering the reordered PCM in Fig. \ref{fig:PCMAfterBp}. We mark $n'_r$ unknown variable nodes (variables that have not been decoded by the BP) as \emph{reference variables} (either by picking them at random or by using a more educated approach as discussed below) and remove them from the list of unknowns. We then permute the matrix columns so that the $n'_r$ columns corresponding to these reference variables are placed immediately after the columns corresponding to the $n_d$ decoded variables. We then perform a \emph{diagonal extension step} \cite{EffEncLDPC} on this column permuted PCM. This means that we check for undecoded rows with a single unknown variable node in the remaining columns (those with indices larger than $n_d+n'_r$), and permute the PCM to start building a diagonal just below the $n_c$ rows corresponding to decoded PCs, as shown in Fig. \ref{fig:diag_exten}.
	\begin{figure}
		\centering
		\includegraphics[width=0.7\linewidth]{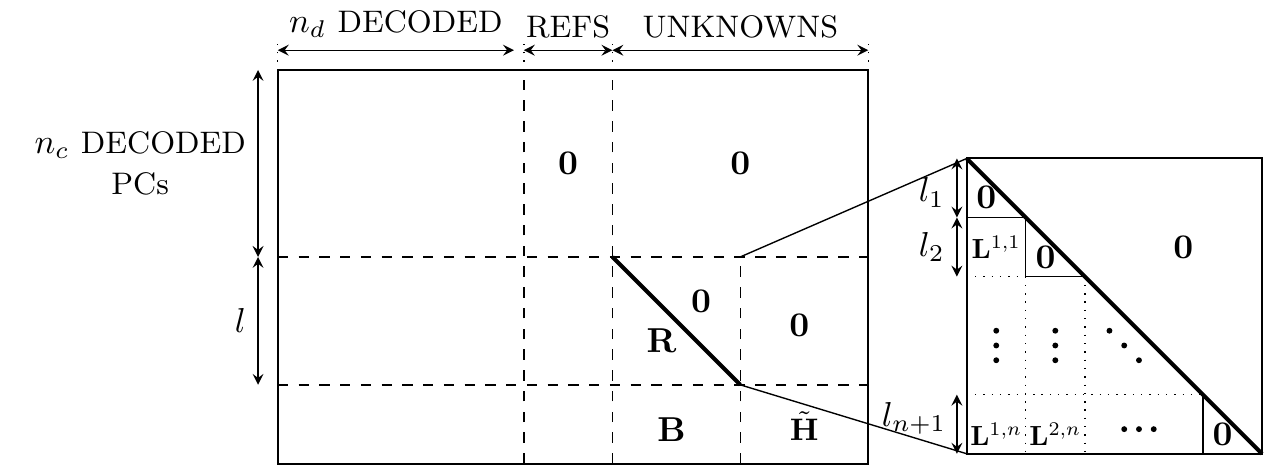}
		\caption{Diagonal Extension, the bold diagonal is filled with ones. On the right we show a zoom into the sub-matrix $\bR$, which is where the diagonalization takes place.}
		\label{fig:diag_exten}
	\end{figure}
	That is, assume we found $l_1$ such rows with a single one in the unknown variable nodes, and the locations of ones in these rows are ${(r_1, c_1), (r_2, c_2), \ldots , (r_{l_1}, c_{l_1})}$. Then, for every $i\in \{1, \ldots, l_1\}$ we permute row $r_{i}$ with row $n_c+i$, and column $c_{i}$ with column $n_d+n'_r+i$. The result is shown in Fig. \ref{fig:diag_exten} for $n=0$, $l=l_1$, $\bR=\bI$ (i.e., after the first step, there are only zeros below the diagonal) and $n'_r$ reference variables. We proceed by applying additional diagonal extension steps on the rows of the obtained matrix below the $n_c+l_1$ first ones ($\tilde{\bH}$ in Fig. \ref{fig:diag_exten}, $n=0$ and $l=l_1$) repeatedly, until this is not possible anymore (no more rows with a single unknown variable node that has not been diagonalized yet). By doing so we extend the diagonal of size $l=l_1$ that we have already constructed. Denote the total number of found rows with a single unknown variable node by $l \ge l_1$. For example, the resulting PCM for $n+1$ diagonal extension steps is also shown in Fig. \ref{fig:diag_exten} but now for $l=\sum_{i=1}^{n+1}l_i$ and some matrices $\{\bL^{i, j}\}$ for $i, j \in \{1,2, \ldots, n\}$. Essentially, after we have chosen $n'_{r}$ reference variable nodes, we have applied a BP decoding procedure that incorporates row and column permutations until convergence of the BP decoder in order to obtain the matrix shown in Fig. \ref{fig:diag_exten} with the largest possible $l$. We now repeat this process of choosing additional reference variables, permuting the matrix such that the columns corresponding to these reference variables are placed next to the previously chosen reference columns, and continuing the triangulation in order to further increase $l$, until we obtain the matrix shown in Fig. \ref{fig:FinalPCM}, where $\bH^{(1,3)}$ is square lower triangular with ones on its diagonal.
	This stage can be summarized as follows. We iterate over the following basic procedure:
	\begin{enumerate}
		\item Choose $n'_r$ additional reference variables from the unknown variables that have not been diagonalized yet.
		\item Permute the PCM columns so that the columns corresponding to these additional reference variables are placed just after the columns of the reference variables from previous applications of the procedure.
		\item Apply as many diagonal extension steps as possible.
	\end{enumerate}
	As will be discussed below, a major desirable property of the process is that typically the total number of reference variables used, $n_r$, is small.
	
	\item \textbf{Expressing unknown variables as a combination of reference variables}. When we enter this stage the PCM has the form shown in Fig. \ref{fig:FinalPCM}. It contains $n_d$ decoded variables (that have been obtained in the first BP stage), whose value is known and denoted by $\bd$, $n_r$ reference variables and $n_u$ remaining unknown variables. Denote the value of the reference variables by $\br$ and the value of the unknown variables by $\bu$. Both $\br$ and $\bu$ are unknown at this point. In this stage we express $\bu$ as an affine transformation of $\br$ over GF(2), i.e.,
	\begin{equation}
	\bu=\bA\br + \ba
	\label{eq:u_Ar_b}
	\end{equation}
	where $\bA$ is a matrix of size $n_u \times n_r$, and $\ba$ is a vector of length $n_u$. Due to the triangulation and the sparsity of the PCM, this step can be computed efficiently using back-substitution as follows. Using the matrix form in Fig. \ref{fig:FinalPCM}, we have
	\begin{equation}
	\left( \begin{matrix}
	\bs^{(1)} \\ \bs^{(2)}
	\end{matrix} \right)
	=
	\left( \begin{matrix}
	\bH^{(1,2)} & \bH^{(1,3)} \\
	\bH^{(2,2)} & \bH^{(2,3)}
	\end{matrix} \right)
	\left( \begin{matrix}
	\br \\ \bu
	\end{matrix} \right)
	\label{eq:mat_eq_s1_s2}
	\end{equation}
	where $\bs^{(1)} = \bH^{(1,1)} \bd$ and $\bs^{(2)} = \bH^{(2,1)} \bd$. Hence,
	\begin{equation}
	\bH^{(1,3)} \bu = \bH^{(1,2)} \br + \bs^{(1)}\ .
	\end{equation}
	Since $\bH^{(1,3)}$ is a lower triangular, $n_u \times n_u$ matrix, with ones on its diagonal, we thus have for $l=1,2,...,n_u$,
	\begin{equation}
	u_l = s_l^{(1)} + \sum_j H_{l,j}^{(1,2)} r_j + \sum_{j<l} H_{l,j}^{(1,3)} u_j.
	\label{eq:u_l}
	\end{equation}
	Suppose that we have already expressed $u_i$ as
	\begin{equation}
	u_i = \sum_j A_{i,j} r_j + a_i
	\label{eq:u_i}
	\end{equation}
	for $i=1,\ldots,k$, and wish to obtain a similar relation for $i=k+1$. Then, by \eqref{eq:u_l},
	\begin{equation}
	u_{k+1} = s_{k+1}^{(1)} + \sum_j H_{k+1,j}^{(1,2)} r_j + \sum_{i\in\cC_k} u_i
	\end{equation}
	where $\cC_k \defined \left\{ i \: : \: i\le k, H_{k+1,i}^{(1,3)}=1 \right\}$. Substituting \eqref{eq:u_i} for $u_i$ in the last summation and rearranging terms yields,
	\begin{equation}
	u_{k+1} = \sum_j A_{k+1,j} r_j + a_{k+1}
	\label{eq:u_k1}
	\end{equation}
	where
	\begin{equation}
	A_{k+1,j} = H_{k+1,j}^{(1,2)} + \sum_{i\in\cC_k} A_{i,j},\quad
	a_{k+1}   = s_{k+1}^{(1)} + \sum_{i\in\cC_k} a_{i}.
	\end{equation}
	This shows that the rows of $\bA$ as well as the elements of the vector $\ba=(a_1,\ldots,a_{n_u})^T$ can be constructed recursively for $k=1,2,..$ as follows. Denote the $k$'th row of $\bA$ by $\overline{\ba}_k$ and the $k$'th row of $\bH^{(1,2)}$ by $\overline{{\bhh}}^{(1,2)}_k$. Then, we initialize the recursion by,
	\begin{equation}
	\overline{\ba}_1 = \overline{\bhh}_1^{(1,2)}, \quad a_1 = s_1^{(1)} \label{eq:a_init}.
	\end{equation}
	Then, for $k=1,\ldots,n_u-1$,
	\begin{equation}
	\overline{\ba}_{k+1} = \overline{\bhh}_{k+1}^{(1,2)} + \sum_{i\in\cC_k} \overline{\ba}_{i}, \quad
	a_{k+1}              = s_{k+1}^{(1)} + \sum_{i\in\cC_k} a_{i}. \label{eq:a_recursion}
	\end{equation}
	
	\item \textbf{Finding the value of the reference and unknown variables}. By \eqref{eq:mat_eq_s1_s2} and \eqref{eq:u_Ar_b},
	it follows that $\br$ is obtained by solving the linear equation
	\begin{equation}
	\left( \bH^{(2,2)} + \bH^{(2,3)} \bA \right)\br = \bs^{(2)} + \bH^{(2,3)} \ba.
	\label{eq:linear_system_Ar_b}
	\end{equation}
	To solve the linear system \eqref{eq:linear_system_Ar_b} we use Gaussian elimination over GF(2) on the $n_e \times (n_r+1)$ augmented matrix corresponding to \eqref{eq:linear_system_Ar_b}, where $n_e$ is the number of remaining PC rows of the final PCM in Fig. \ref{fig:FinalPCM}.
	Note that this system must have a valid solution (the true transmitted codeword). Decoding will be successful if and only if the true transmitted codeword is the unique solution.
	In Section \ref{sec:EffMLComp} we will show that $n_r$ and $n_e$ are small. Hence, the complexity of this stage is also small.
	
	After we have obtained $\br$, we can also obtain $\bu$ from \eqref{eq:u_Ar_b}. Thus, we have obtained the decoded codeword.
\end{enumerate}

\subsection{Complexity} \label{sec:EffMLComp}
Recall that the pruned PCM is obtained offline. Hence, this pre-processing task has no impact on the decoding complexity. We now analyze the complexity of each stage of the efficient ML decoding algorithm described above in terms of the number of XORs (additions over GF(2)).
\begin{enumerate}
	\item
	The complexity of BP decoding over the BEC is determined by the number of edges in the Tanner graph \cite{modern_coding_theory}. We start with a PCM with $\mathcal{O}(N\log_2 N)$ edges (corresponding to the FG shown in Fig. \ref{fig:PolarFG}). Then, after applying the pruning, the total number of edges decreases. Hence, the complexity of the BP decoding on the pruned PCM is $\mathcal{O}(N\log_2 N)$.
	\item
	A diagonal extension step is equivalent to a BP iteration over the BEC with permutations instead of XORs: 
	For each undecoded row with a single unknown variable node, we apply one row and one column permutation rather than computing the XOR of all the known variable nodes in that row (in BP). Hence, the total number of permutations is of order $\mathcal{O}(N\log_2 N)$ as well.
	\item
	We first compute $\bs^{(1)} = \bH^{(1,1)} \bd$, where the complexity (number of XORs) is the number of ones in $\bH^{(1,1)}$. We then apply the recursion described by \eqref{eq:a_init} and \eqref{eq:a_recursion}. For every 1 entry in $\bH^{(1,3)}$ which is not on the diagonal, \eqref{eq:a_recursion} requires $n_r + 1$ XORs. In total, this step requires $(n_r+1)(\gamma - n_u)$ XORs, where $\gamma$ is the number of ones in $\bH^{(1,3)}$. Denote by $d_c^{(1)}$ the average number of ones in the rows of the final PCM, described in Fig. \ref{fig:FinalPCM}, corresponding to $\bH^{(1,1)}$, $\bH^{(1,2)}$ and $\bH^{(1,3)}$. Since the final PCM is sparse (it was obtained from the pruned PCM by row and column permutations only), $d_c^{(1)}$ is small. The total computational cost of this stage is $\mathcal{O}(d_c^{(1)} \cdot (n_r+1) \cdot n_u)$. We can also say that the total computational cost of this stage is $\mathcal{O}(d_c^{(1)} \cdot (n_r+1) \cdot N \log_2 N)$ (since $n_u=\mathcal{O}(N \log_2 N)$).
	\item
	We first compute $\bs^{(2)} = \bH^{(2,1)} \bd$, where the complexity is the number of ones in $\bH^{(2,1)}$. We then compute $\bH^{(2,2)} + \bH^{(2,3)} \bA$ and $\bs^{(2)} + \bH^{(2,3)} \ba$, required in \eqref{eq:linear_system_Ar_b}. The complexity is $\mathcal{O}(\rho (n_r+1))$, where $\rho$ is the number of ones in $\bH^{(2,3)}$. Let $\rho = d_c^{(2)} n_e$, where $d_c^{(2)}$ is the average number of ones in the $n_e$ remaining PC rows of the final PCM. Hence, the above complexity is $\mathcal{O}(d_c^{(2)} \cdot (n_r+1) \cdot n_e)$. Finally, we solve the linear system \eqref{eq:linear_system_Ar_b} with $n_e$ equations and $n_r$ unknowns, using Gaussian elimination. The total complexity of the Gaussian elimination is $\mathcal{O}(n_e\cdot n_r^2)$. After we have obtained $\br$, we use \eqref{eq:u_Ar_b} to obtain the unknown variables, $\bu$. The complexity of this additional step is $\mathcal{O}(n_u\cdot n_r)$.
\end{enumerate}
As can be seen, the complexity of the algorithm will be strongly influenced by the number of reference variables, $n_r$, and the number of remaining PCs, $n_e$. We argue the following.
\begin{proposition} \label{prop_expected_nr}
	When the code rate is below channel capacity, i.e., $R<C(\epsilon)=1-\epsilon$, we have,
	for any $q, p > 0$,
	\bre
	\lim_{N\rightarrow \infty} {\rm E} \{\left(N \log N\right)n_r^{q} \cdot n_e^{p}\} = 0 \: .
	\label{eq:prop_expected_nr}
	\ere
\end{proposition}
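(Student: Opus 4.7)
\emph{Proof proposal.} The plan is to exploit the fact that both $n_r$ and $n_e$ vanish on the event that the first-stage BP decoder succeeds, which happens with probability tending to one superpolynomially fast when $R<1-\epsilon$. The remaining work is a worst-case bound on the complementary event.

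\textbf{Step 1 (deterministic sizing).} I would first observe that the pruned PCM has $N'=\mathcal{O}(N\log N)$ variable nodes and the same number of parity-check rows (up to the constant difference $N'-K$, with $N'-K\le N'$). Since $n_r$ counts reference variables chosen only from the set of variables that survived Stage~1, and $n_e$ counts remaining parity-check rows of the final PCM in Fig.~\ref{fig:FinalPCM}, both satisfy the crude deterministic bound
\begin{equation}
n_r \le N', \qquad n_e \le N', \qquad N' = \mathcal{O}(N\log N).
\end{equation}
Consequently $n_r^{q}\,n_e^{p}\le C(q,p)\,(N\log N)^{q+p}$ for an absolute constant $C(q,p)$.

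\textbf{Step 2 (vanishing on the BP-success event).} Next I would argue that on the event $\mathcal{E}$ that Stage~1 of the algorithm decodes all variable nodes (i.e., BP on the pruned PCM succeeds), no reference variables need to be selected and no check rows remain after diagonalization; hence $n_r=n_e=0$ on $\mathcal{E}$. Therefore
\begin{equation}
\mathrm{E}\{(N\log N)\,n_r^{q}\,n_e^{p}\}
\le C(q,p)\,(N\log N)^{1+q+p}\,\Pr(\mathcal{E}^c).
\end{equation}

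\textbf{Step 3 (bounding $\Pr(\mathcal{E}^c)$).} This is the main obstacle and the crux of the proof. I would argue that BP over the BEC on the pruned PCM decodes the same set of variables as BP on the original (unpruned) polar factor graph, since on the BEC the BP output is scheduling-invariant and the pruning operations of Section~\ref{sec:background} only merge equivalent nodes and remove redundant ones. Moreover, BP on the polar factor graph on the BEC is at least as powerful as the successive cancellation decoder, because SC is a particular sequential scheduling of the peeling procedure implemented by BP. Hence
\begin{equation}
\Pr(\mathcal{E}^c) \le P_{\mathrm{SC}}(N,\epsilon).
\end{equation}
By the Ar\i kan--Telatar polarization result, provided the frozen set $\bar{\cA}$ is chosen by the standard polar code design for the BEC$(\epsilon)$ and the code rate $R$ satisfies $R<1-\epsilon$, the SC block erasure probability satisfies $P_{\mathrm{SC}}(N,\epsilon)=\mathcal{O}(2^{-N^{\beta}})$ for every $\beta<1/2$.

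\textbf{Step 4 (combining).} Putting these pieces together,
\begin{equation}
\mathrm{E}\{(N\log N)\,n_r^{q}\,n_e^{p}\}
\le C(q,p)\,(N\log N)^{1+q+p}\cdot \mathcal{O}\!\left(2^{-N^{\beta}}\right)\xrightarrow[N\to\infty]{}0,
\end{equation}
since the exponential factor dominates any polynomial in $N\log N$. The main difficulty, as noted, is the justification that $\Pr(\mathcal{E}^c)$ inherits the super-polynomial decay of $P_{\mathrm{SC}}$; once that equivalence (BP on pruned PCM $\succeq$ BP on polar FG $\succeq$ SC on the BEC) is in hand, the rest is bookkeeping with the trivial size bounds from Step~1.
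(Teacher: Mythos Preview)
Your proposal is correct and follows essentially the same argument as the paper's proof: bound $n_r$ and $n_e$ trivially by $\mathcal{O}(N\log N)$, note they vanish when Stage~1 BP succeeds, and then control $\Pr(\mathcal{E}^c)$ via the chain BP-on-pruned-PCM $=$ BP-on-polar-FG $\succeq$ SC together with the Ar{\i}kan--Telatar bound $P_{\mathrm{SC}}\le 2^{-N^{\beta}}$. The paper's proof is the same, citing \cite[Lemma~6]{PCForChannelSrc} for the BP-vs-SC comparison and the scheduling-invariance of BP on the BEC for the pruning equivalence.
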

\begin{proof}
	By \cite[Lemma 6]{PCForChannelSrc}, when decoding a polar code transmitted over the BEC, BP on the standard polar FG cannot perform worse than SC. In addition, applying the BP decoder on the standard polar FG is equivalent to its application on the pruned graph (although the pruning steps have changed the message passing schedule, in the BEC case, the result of BP is invariant to the scheduling used). Now, the block error probability when using SC decoding is bounded by $P_e \leq 2^{-N^{\beta}}$, for any $\beta<1/2$, \cite{arikan2009on}. Hence, the error probability of the BP algorithm applied in the first stage of our proposed decoder is also bounded by the same term. Now, whenever BP fails to decode all bits, $n_r,n_e$ are bounded by $N \log N$. Thus, ${\rm E} \{\left(N \log N\right)n_r^q \cdot n_e^p\} \leq \left(N \log N\right)^{1 + p + q}\cdot 2^{-N^{\beta}}$. This immediately yields \eqref{eq:prop_expected_nr}.
\end{proof}
\begin{proposition} \label{prop_N_logN}
Suppose that the PCM pruning algorithm \cite{SparseGraphsBPPolar} is modified such that the maximum degree of each PC node is at most some constant $d$.
Then the average computational complexity of the decoding algorithm is $\mathcal{O}(N\log N)$.
\end{proposition}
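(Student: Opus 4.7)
The plan is to decompose the total expected cost into the contributions from stages 1--4 and show that stages 3 and 4 contribute only $o(1)$ in expectation because they are executed only on the BP-failure event, whose probability decays super-polynomially thanks to Ar\i kan's bound.

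First, I would handle the two stages that always run. The pruned PCM has $\mathcal{O}(N \log N)$ edges (the pre-pruned PCM has $N \log_2 N$ check nodes of degree at most $3$, and pruning only reduces the edge count), so one pass of BP on it costs $\mathcal{O}(N \log N)$ XORs and the triangulation of stage 2 costs an equal number of row/column permutations. These bounds hold deterministically on every received word, so stages 1 and 2 contribute $\mathcal{O}(N \log N)$ to the expected complexity irrespective of whether BP succeeds. The assumption that the maximum PC-node degree in the (modified) pruned PCM is bounded by a constant $d$ does not change these bounds.

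Next, I would observe that stages 3 and 4 are invoked only when BP has failed to decode every variable. Whenever this happens, the relevant parameters satisfy the trivial deterministic bounds $n_r, n_e, n_u = \mathcal{O}(N \log N)$ (they are at most the total number of variable or check nodes in the pruned PCM). Combined with the bounded-degree hypothesis $d_c^{(1)}, d_c^{(2)} \le d$, the per-execution cost listed in the complexity analysis,
\begin{equation}
\mathcal{O}\bigl(d(n_r+1) n_u\bigr) + \mathcal{O}\bigl(d(n_r+1) n_e + n_e n_r^2 + n_u n_r\bigr),
\end{equation}
is therefore bounded by some fixed polynomial $P(N \log N)$.

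The last step is to multiply by the BP failure probability. By the argument used to prove Proposition \ref{prop_expected_nr} (BP dominates SC by \cite[Lemma 6]{PCForChannelSrc}, and SC has block error probability at most $2^{-N^{\beta}}$ for any $\beta<1/2$ by \cite{arikan2009on}), the event that stages 3--4 run has probability at most $2^{-N^{\beta}}$. Hence
\begin{equation}
\mathrm{E}[\text{cost of stages 3--4}] \le P(N\log N)\cdot 2^{-N^{\beta}} = o(1),
\end{equation}
and adding the deterministic $\mathcal{O}(N \log N)$ from stages 1 and 2 yields the claimed bound.

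The main obstacle is not analytical but conceptual: one must realize that the potentially expensive Gaussian-elimination step in stage 4 is harmless in expectation, because the bounded-degree assumption keeps $d_c^{(1)}, d_c^{(2)}$ constant, and the super-polynomially small failure probability of BP on a capacity-achieving polar code absorbs every polynomial worst-case bound. All the remaining work is bookkeeping of the four stage-wise complexity expressions already computed above.
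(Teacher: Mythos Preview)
Your proposal is correct and is essentially the same argument the paper has in mind: the paper simply states that Proposition~\ref{prop_N_logN} ``follows immediately by the computational complexity analysis above together with Proposition~\ref{prop_expected_nr},'' and your write-up is just an explicit unpacking of that sentence (the bounded-degree hypothesis is used exactly where you use it, to force $d_c^{(1)},d_c^{(2)}\le d$, and Proposition~\ref{prop_expected_nr} supplies the $2^{-N^{\beta}}$ factor that kills the polynomial worst-case cost of stages~3--4). One minor inaccuracy: stage~2 is also entered only on BP failure, not ``always,'' but since you bound its cost deterministically by $\mathcal{O}(N\log N)$ this does not affect the argument.
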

The proof follows immediately by the computational complexity analysis above together with Proposition \ref{prop_expected_nr}.
We note that the modification in the pruning algorithm required by Proposition \ref{prop_N_logN} can be easily implemented, but is not required in practice since even without this modification, the PCM obtained is sparse. 

This $\mathcal{O}(N\log N)$ complexity should be contrasted with that of straight-forward ML decoding \eqref{eq:MLDecBEC} which is $\mathcal{O}(N^3)$ ($N \cdot\epsilon$ variables where $\epsilon$ is the erasure probability and $N-K$ equations).

\subsection{CRC-Polar concatenation} \label{sec:AddCRC}
We have considered two approaches for incorporating CRC in the proposed algorithm. In the first, we can think of the CRC as additional PC constraints and add them to the PCM before applying the pruning procedure. Unfortunately, the PCM of the CRC is not sparse, and this degrades the effectiveness of the pruning and results in a larger matrix.
As an example, for $\cP\left(256, 134\right)$ ($\cP\left(512, 262\right)$, respectively) with CRC of length $6$, the pruned PCM has blocklengh $N'=533$ ($N'=1150$), compared to $N'=355$ ($N'=773$) for a plain polar PCM.

Therefore, we used an alternative approach, where we add the additional CRC constraints after the pruning process. Since the CRC is a high rate code, this only adds a small number of equations to the PCM.
First, we need to obtain the CRC constraints in terms of the codeword (last $N$ columns of the pruned polar PCM). Denote by $\bH_{\crc}$ the PCM of the CRC, i.e., for every information word $\bu$ (including the frozen bits), we have $\bH_{\crc}\cdot \bu = \mathbf{0}$. It is known from \cite{LP_Polar_decoding} that $\bu^T=\bc^T\cdot \bG_N$ where $\bG_N$ is the $N \times N$ full generator matrix of the polar code. Hence, the CRC constraints can be expressed in terms of the codeword as $\mathbf{0}=\bH_{\crc}\cdot \bu = \bH_{\crc}\cdot \bG_N^T\cdot \bc$. That is, the CRC constraints that we add to the pruned polar PCM constraints are $\bH_{\crc}\bG_N^T$. In order to decrease the density (number of ones) of $\bH_{\crc}\bG_N^T$, we used the greedy algorithm proposed in \cite{PolarBPCRCBrink}. This algorithm considers the Hamming weight of the sum of every two rows, $i$ and $j$. If this sum is smaller than that of either row $i$ or row $j$, then the algorithm replaces the row (either $i$ or $j$) with larger Hamming weight, with this sum.  

\subsection{Simulation results}\label{sec:sim_res}
We used numerical simulations to evaluate the error rate and computational complexity of the proposed algorithm and compare it to (CRC-aided) SCL. In all simulations, we used the standard concatenated CRC-Polar scheme with a total rate of $1/2$ and CRC of length $6$. Fig. \ref{fig:BEC_res} shows the BER / FER performance and the mean number of reference variables, $n_r$, and remaining equations, $n_e$ (when the codeword was successfully decoded by the BP algorithm, which is applied in the first stage, $n_r = n_e = 0$). Whenever the diagonal cannot be further extended, we choose a single reference variable, that is $n'_r=1$. We considered two methods for selecting the reference variable. The first chooses the reference variable at random among all remaining unknown CVNs. The second, following \cite[Method C]{LDPC_ML}, chooses the unknown variable from a PC with the smallest number of remaining unknown variables. The second approach was slightly better and is hence the one used in Fig. \ref{fig:BEC_res}.
Since our decoder is the ML decoder, it achieves the best FER performance compared to all other decoders. As can be seen in Fig. \ref{fig:BEC_res}, our ML decoder also achieves the best BER performance. In the last row of Fig. \ref{fig:BEC_res} we can see that the number of required reference variables, $n_r$, and the number of remaining equations, $n_e$, are small. For example, for a codeword of length $N=512$ and $\epsilon \le 0.37$, the average number of required reference variables is less than $0.1\%$ of the blocklength.
\begin{figure}
	\centering
	\begin{subfigure}[b]{.3\linewidth}
		\includegraphics[width=\linewidth]{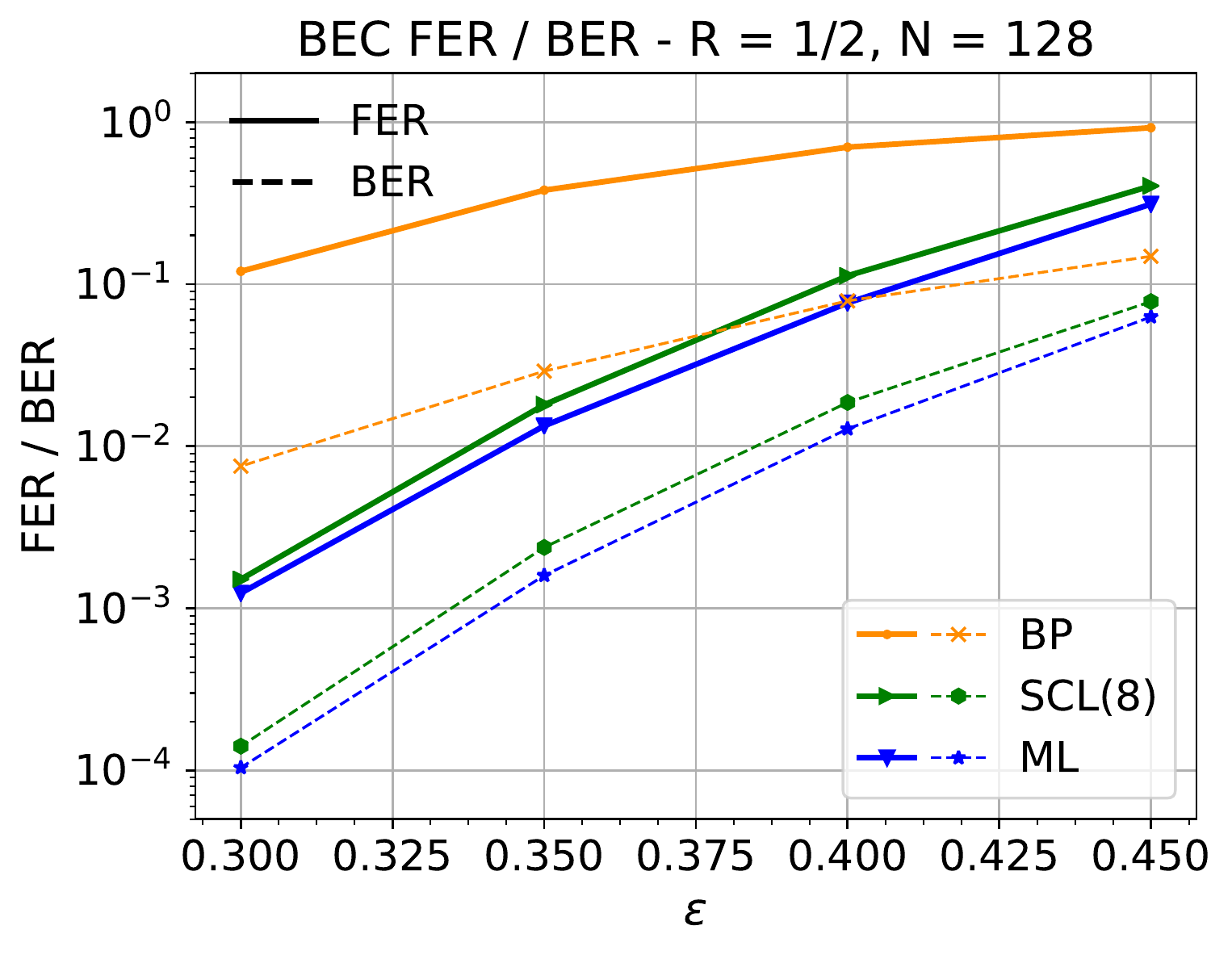}
	\end{subfigure}
	\hspace{0.006\textwidth}
	\begin{subfigure}[b]{.3\linewidth}
		\includegraphics[width=\linewidth]{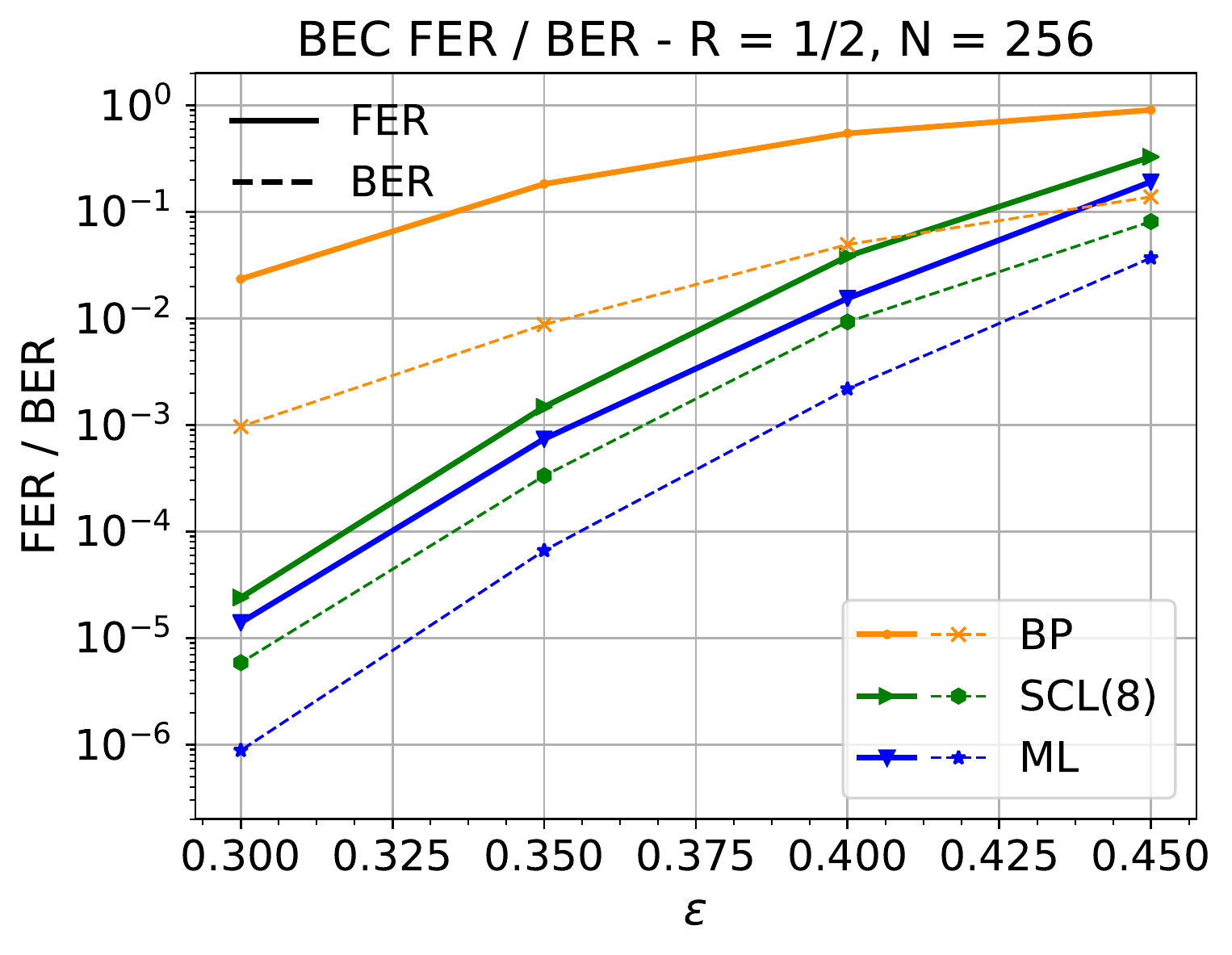}
	\end{subfigure}
	\hspace{0.006\textwidth}
	\begin{subfigure}[b]{.3\linewidth}
		\includegraphics[width=\linewidth]{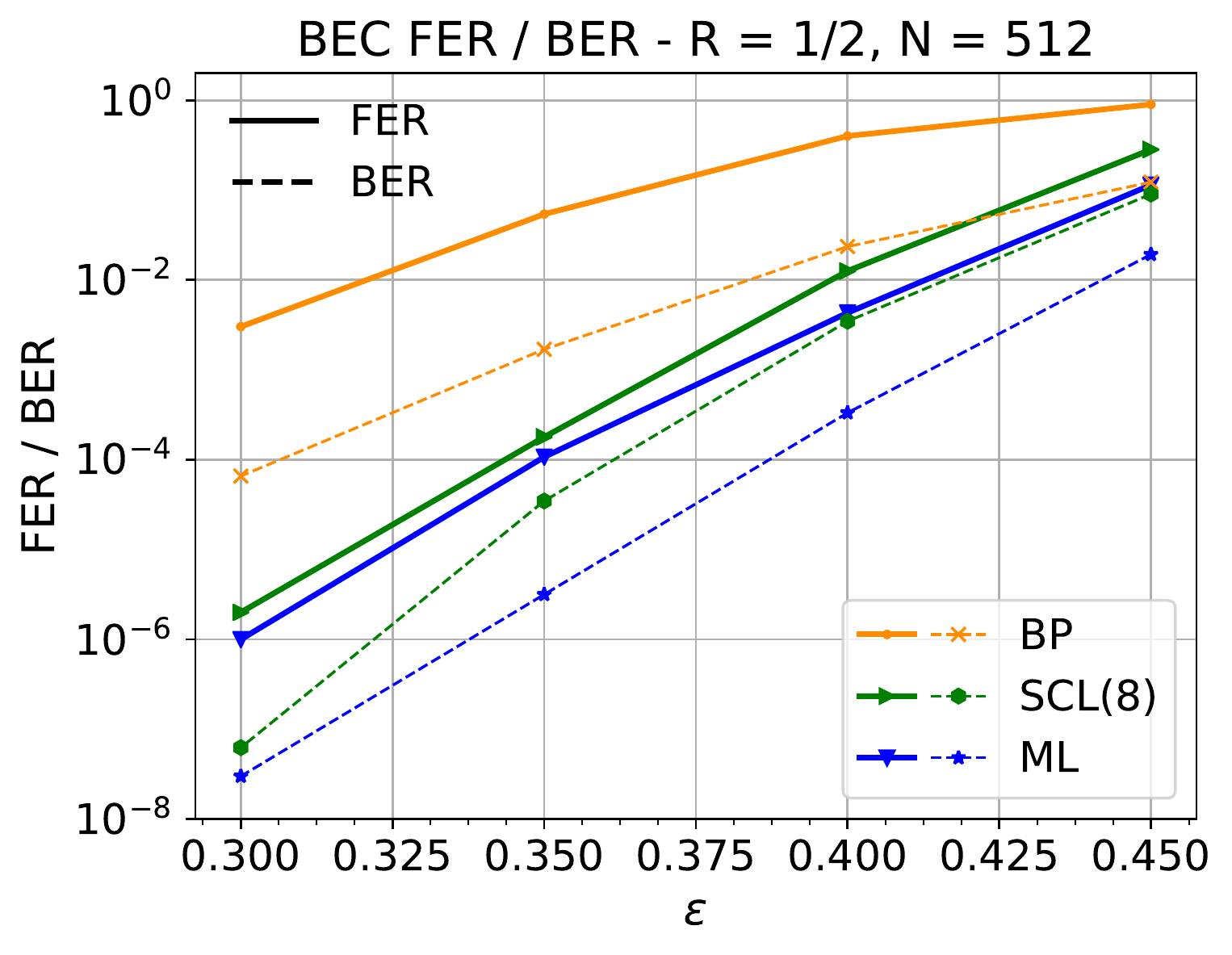}
	\end{subfigure}
	
	\vspace{2ex}
	\hspace{0.002\textwidth}
	\begin{subfigure}[b]{.3\linewidth}
		\includegraphics[width=\linewidth]{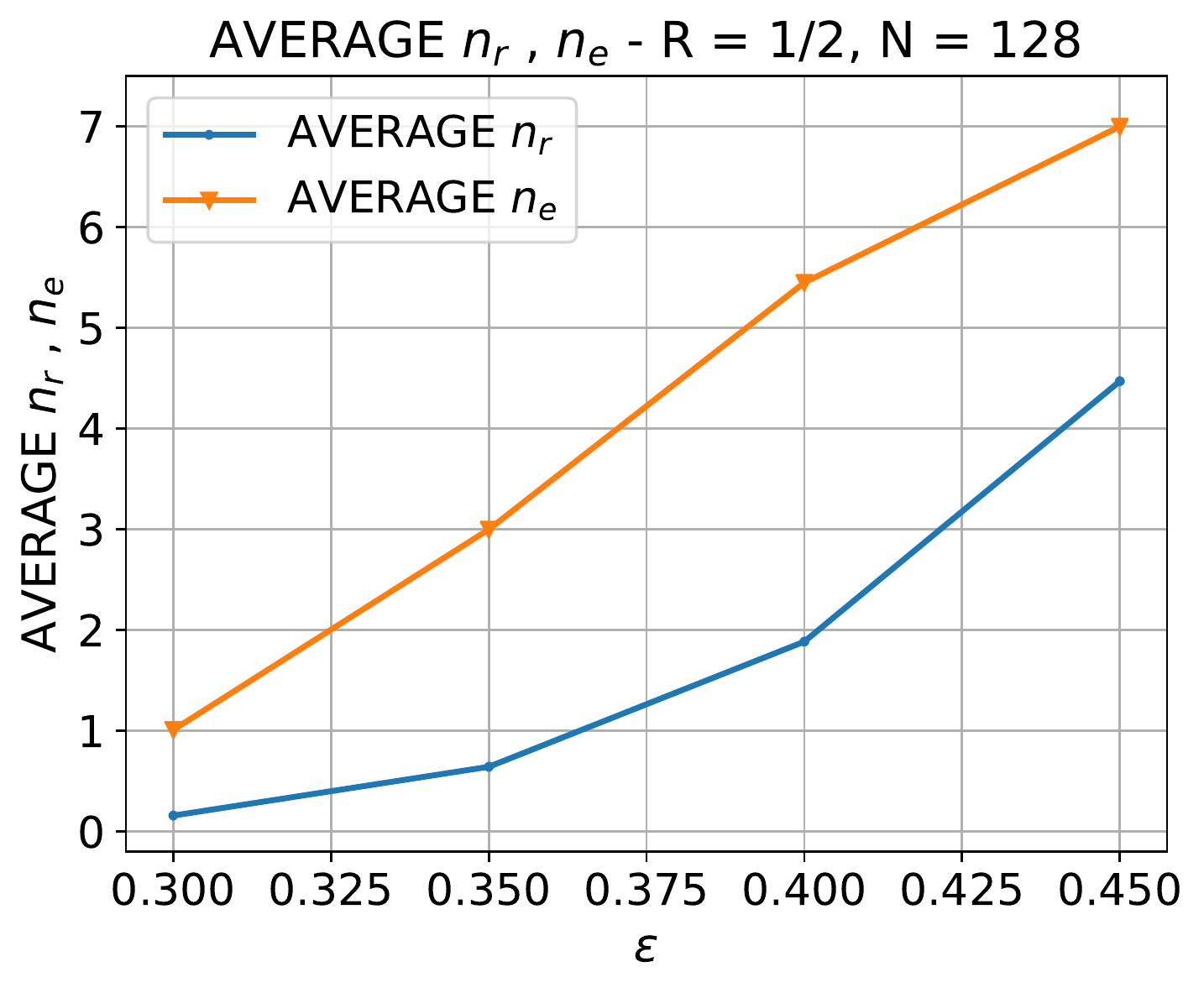}
	\end{subfigure}
	\hspace{0.01\textwidth}
	\begin{subfigure}[b]{.3\linewidth}
		\includegraphics[width=\linewidth]{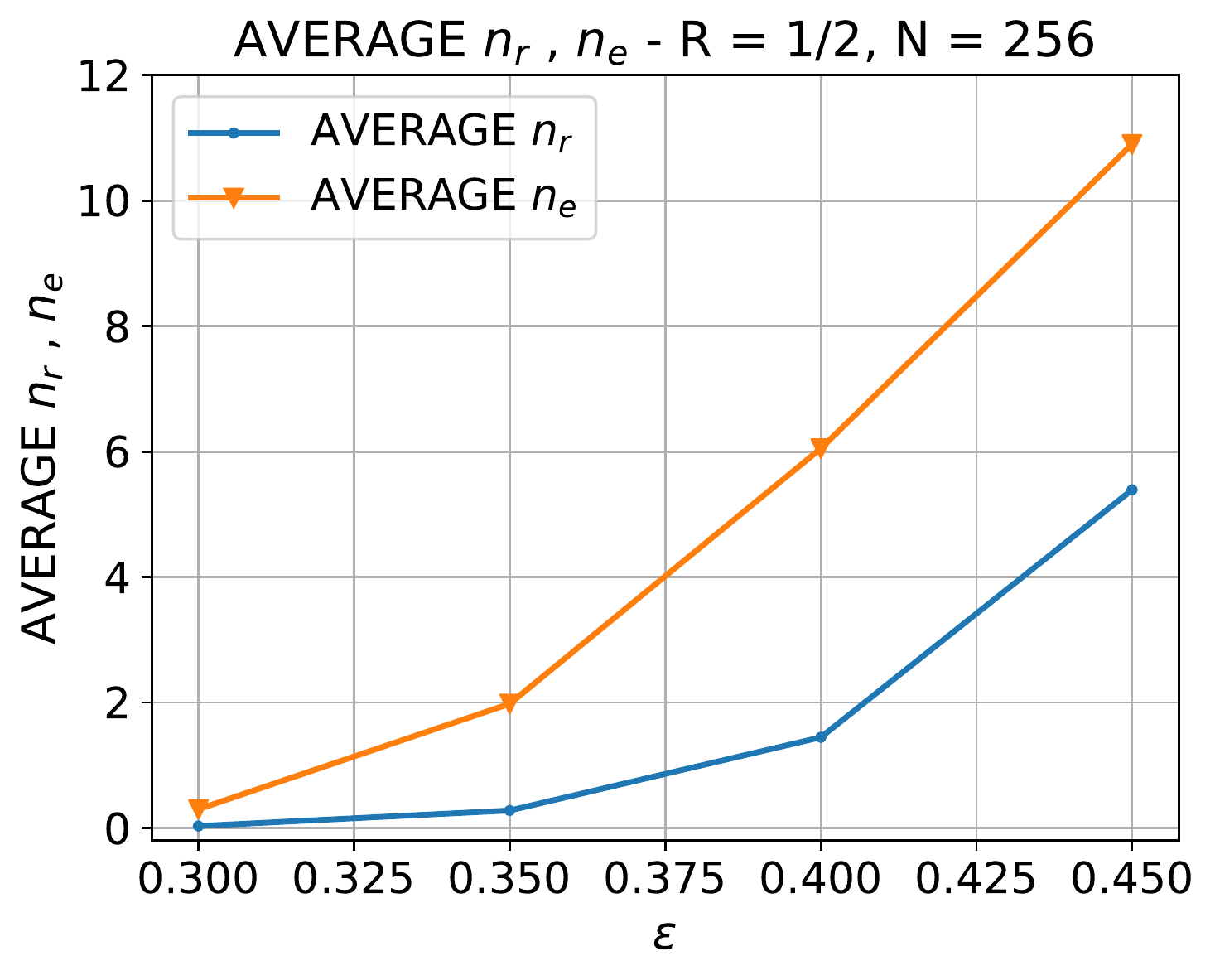}
	\end{subfigure}
	\hspace{0.008\textwidth}
	\begin{subfigure}[b]{.3\linewidth}
		\includegraphics[width=\linewidth]{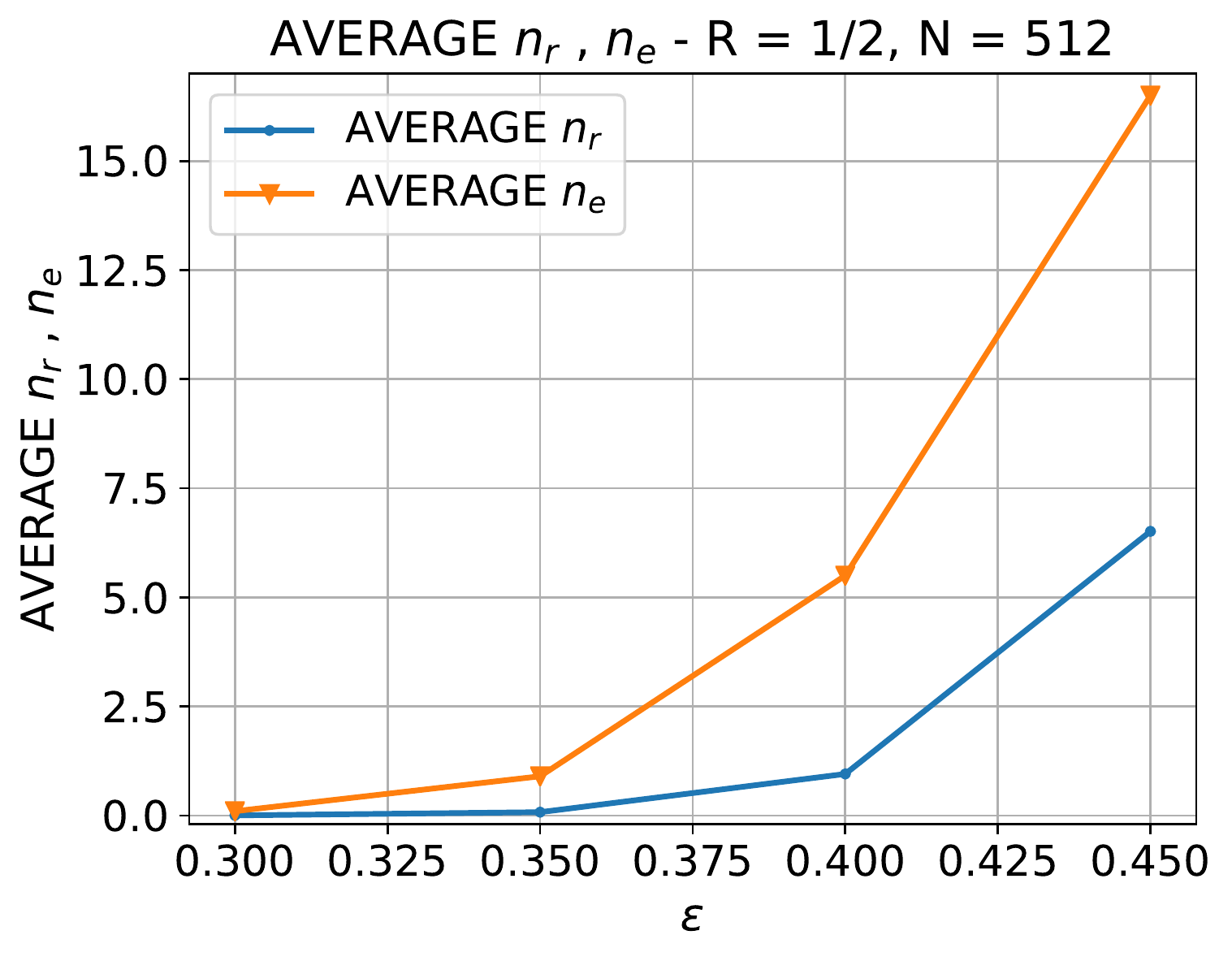}
	\end{subfigure}
	
	\caption{BER, FER, and average $n_r$ and $n_e$, for CRC concatenated polar codes with rate $1/2$ and different blocklengths over the BEC.}
	\label{fig:BEC_res}
\end{figure}

\subsection{Parallel implementation} \label{sec:parImp}
For efficient parallel implementation, we suggest modifying stage 2 in the decoding algorithm in Section \ref{sec:EffML}. This change simplifies stages 3 and 4.
Recall that the goal of stage 2 is to obtain the PCM shown in Fig. \ref{fig:FinalPCM}. The revised algorithm further specifies $\bH^{(1,3)}=I$ ($n_u \times n_u$ diagonal matrix) and $\bH^{(2,3)}=0$.
For that purpose we modify stage 2 by adding the following row elimination step after each diagonal extension step.
Consider for example Fig. \ref{fig:diag_exten} that describes the PCM after the first diagonal extension, where the number of reference variables is $n'_r$. Suppose that the first diagonal extension step found $l_1$ rows with a single unknown variable node, so that after the first diagonal extension step, the PCM is shown in Fig. \ref{fig:diag_exten} with $n=0$, $l=l_1$, $\bR=\bI$ and $n'_r$ reference variables. At this point, we add the row elimination step: We use row additions (XORs) with the $l=l_1$ rows of $\bR=\bI$, in order to zero out the sub-matrix $\bB$ in Fig. \ref{fig:diag_exten} (i.e., after applying these row additions, $\bB=\mathbf{0}$). The point is that these row additions can be executed in parallel.
The same row elimination step is added after each diagonal extension step.

Stages 3 and 4 simplify considerably due to the modification in stage 2.
Using $\bH^{(1,3)} = \bI$ in \eqref{eq:mat_eq_s1_s2} yields \eqref{eq:u_Ar_b} for $\bA=\bH^{(1,2)}$, $\ba=\bH^{(1,1)}\bd$.
Using $\bH^{(2,3)}=0$ simplifies \eqref{eq:linear_system_Ar_b} to
$\bH^{(2,2)} \br = \bs^{(2)}$.
To solve this system, which is typically small ($n_e \times n_r$), we can use the parallel Gaussian elimination over GF(2) algorithm proposed in \cite{FastGausElim} with time complexity $\mathcal{O}(M)$ for a random $M\times M'$ matrix.
Our approach in the diagonalization process is somewhat similar, but since our matrix is sparse, we can find several weight one rows simultaneously, and we can use this to pivot in parallel more than one column at a time.

Although $n'_r=1$ minimizes the total number of reference variables, $n_r$, used, for efficient parallel implementation of stage 2 it is beneficial to use $n'_r>1$.

\section{polar code CBPL-OSD decoding} \label{sec:OSD}
In this section, we present an efficient CBPL-OSD decoder for polar codes transmitted over the AWGNC. 

Consider a linear block code $\cC(N, K)$. Suppose that the BPSK modulated codeword, $\text{BPSK}(\bc)$, is transmitted over an AWGNC. The received signal is given by,
\bre
\by = \text{BPSK}(\bc) + \bn
\label{eq:AWGNC}
\ere
where $\text{BPSK}(\bc)\triangleq (-1)^{\bc} \triangleq((-1)^{c_1},\cdots,(-1)^{c_{N}})$, and
where the elements of $\bn$ are i.i.d Gaussian, $n_i\sim\cN(0,\sigma^2)$.
The uncoded LLRs of the codeword's bits, $\ell^{\ch}_i$, based on the respective channel output, $y_i$, are given by
\begin{equation}
\ell^{\ch}_i =
\ln \frac{\Pr(c_i=0 \mid y_i)}{\Pr(c_i=1 \mid y_i)} = 2\frac{y_i}{\sigma^2}
\label{eq:LLRxi}
\end{equation}
for $i=1,\ldots,N$, where $\sigma^2 = N_0/2$.
Also denote by $\bl^{\ch} = (\ell^{\ch}_1,\ldots,\ell^{\ch}_N)$.

We start by providing a brief review on OSD \cite{OSD} applied in this setup.

\subsection{Ordered statistics decoder (OSD)} \label{sec:osd_background}
Denote the received LLRs vector by $\bl = \bl^{\ch}$. We define by $|\ell_i|$ the reliability of the $i$'th received symbol.
The OSD algorithm consists of two main parts \cite{OSD}: Finding the most reliable independent basis (MRIB) of the code with respect to the values in $\bl$, and a reprocessing stage.
The process of finding the MRIB is described in detail in \cite{OSD} using a generator matrix of the code. This process can be equivalently described using a full row rank PCM of the code, $\bH$, of size $(N-K) \times N$ as follows.
We first reorder the components of $\bl$ in decreasing order of reliability. Denote the resulting vector by $\bl^1$ and the corresponding permutation by $\lambda_1(\cdot)$, i.e.,
\bre
\bl^1 = (\ell_1^1, \ell_2^1, \ldots, \ell_N^1) = \lambda_1(\bl)
\ere
where $|\ell_1^1| \ge |\ell_2^1| \ge \ldots \ge  |\ell_N^1|$. Similarly we permute the columns of $\bH$ using $\lambda_1$ to obtain $\bH^{(1)} = \lambda_1(\bH)$. Next, starting from the rightmost column of $\bH^{(1)}$, we find the first from the right $N-K$ linearly independent columns. We denote the remaining $K$ columns as the most reliable independent basis (MRIB). We now permute the columns of $\bH^{(1)}$ such that it's first $K$ columns are the MRIB and denote the resulting matrix by $\bH^{(2)}$ and the corresponding permutation by $\lambda_2(\cdot)$. Finally, we bring the PCM to the systematic form
\begin{equation}
\tilde{\bH}=[\ \textbf{A}\ |\ \textbf{I}_{N-K}\ ]
\label{eq:OSDH}
\end{equation}
using elementary row operations, where $\textbf{A}$ is a $(N-K)\times K$ matrix, and
$\textbf{I}_{N-K}$ is the $(N-K)\times (N-K)$ identity matrix.
Note that the aforementioned steps have not altered the code except for permuting its codebits order, such that $\tilde{\cC} = \lambda(\cC) = \lambda_2(\lambda_1(\cC))$, where $\tilde{\cC}$ is the code corresponding to the PCM $\tilde{\bH}$. Also, note that in practice we typically combine the step of finding the MRIB with the step of bringing the PCM to a systematic form.
Denote by $\tilde{\bl}=\lambda(\bl)$ ($\tilde{\by} = \lambda(\by)$, respectively) the permutation of the LLRs vector, $\bl$ (channel output vector, $\by$), using the same permutation $\lambda$.

We can now start the reprocessing stage of the OSD algorithm. In OSD of order $q$, denoted OSD($q$), we generate a list of $\sum_{k=0}^{q} {K \choose k}$ candidate codewords using reprocessing as follows.
We split each codeword, $\bc \in \tilde{\cC}$, into two parts $\bc = [\bc^{(1)}, \bc^{(2)}]$, where $\bc^{(1)}$ is of length $K$ and $\bc^{(2)}$ is of length $N-K$. Then, by the PCM constraints $\tilde{\bH} \bc = \mathbf{0}$ and \eqref{eq:OSDH}, we have,
$\bc^{(2)}  = \bA \bc^{(1)}$.
Denote the $K$ hard decoded bits corresponding to the MRIB by $\bc_0^{(1)}$,
\begin{equation}
c^{(1)}_{0,j}=\begin{cases}
0, & \text{$\tilde{\ell}_{j}\geq 0$}\\
1, & \text{else}
\end{cases} \quad , \quad 1\leq j\leq K \: .
\label{eq:c01j}
\end{equation}
We enumerate all binary vectors, $\hat{\be}_i$, of Hamming weight at most $q$ (error patterns), and generate a list of codewords $\bc_i = [\bc_i^{(1)}, \bc_i^{(2)}] \in \tilde{\cC}$ using
\begin{equation}
\bc_i^{(1)} = \bc_0^{(1)} + \hat{\be}_i,   \quad
\bc_i^{(2)} = \bA \bc_i^{(1)} \label{eq:ci_1_2}.
\end{equation}
We select the (permuted) codeword, $\bc_i$, for which the distance between $\text{BPSK}(\bc_i)$ and $\tilde{\by}$ is minimal, and inversely permute it by $\lambda^{-1}$ to obtain the OSD($q$) estimate of the transmitted codeword.
Intuitively, since $\bc_i^{(1)}$ comprises of the most reliably received bits (more precisely, the MRIB bits), we expect a small number of errors in the hard decoded bits $\bc_0^{(1)}$. As a result, the above list will contain the true transmitted codeword with high probability. Obviously, OSD($q$) with $q=K$ is ML decoding.

\subsection{CBP-OSD and CBPL-OSD decoding of polar codes} \label{sec:cbplosd}
In \cite{BPOSD} it was proposed to combine BP decoding of LDPC codes with OSD by first running the BP decoder and then using the soft decoding output of the codebits in order to determine the MRIB (i.e., rather than sorting the codebits by the absolute values of their uncoded LLRs, they are sorted by the absolute values of their soft BP decoded LLRs).
We take a similar approach for BP decoding of polar codes. We start by describing the CBP-OSD decoder. 
Instead of using the channel output LLRs as the input to the OSD algorithm (i.e., $\bl = \bl^{\ch}$), we propose to initially run a CBP decoder on the CRC augmented polar code FG with some stopping criterion. If the decoding process has not terminated (i.e., the stopping criterion was not satisfied), then we apply OSD($q$) using the output soft decoded LLRs of the codebits. The motivation is that the decoder's soft output is typically much more reliable than the uncoded LLRs (e.g., its hard decoding has lower BER). Moreover, for the codebits with the highest reliability, the BER is even lower. As a result, applying OSD of low order is typically sufficient to improve the decoding error rate performance of CBP significantly. In fact, even when using $q=1$, the resulting CBP-OSD decoder has a significantly lower error rate compared to plain CBP decoding.
Unfortunately, the computational complexity of the Gaussian elimination required by OSD or CBP-OSD to bring $\bH$ into the form \eqref{eq:OSDH} may pose a limitation. To reduce the complexity of this stage, we use a variant of the algorithm presented in Section \ref{sec:EffML} for efficient ML decoding of polar codes over the BEC, to bring the sparse pruned $(N'-K) \times N'$ PCM of the polar code to the form \eqref{eq:OSDH}.

We now describe the CBP-OSD algorithm in detail.
\begin{enumerate}
	\item Apply CBP decoding on the polar code with the following stopping criterion. Denoting by $\hat{\bu},\hat{\bc}$ the hard decoding values of the CBP-decoded polar message and codebits, respectively, terminate if $\hat{\bc}^T=\hat{\bu}^T\cdot\bG_N$, and in addition	
	$\hat{\bu}_{\cA}$ satisfies all CRC constraints, i.e., $\bH_{\crc}\hat{\bu}_{\cA}=\mathbf{0}$.	
	As soon as the early termination criterion is satisfied, we return the decoded word and exit. If early termination was not achieved after a certain number, $I_{\max}$, of iterations, we proceed with the following stages.
	
	\item Sort the obtained output LLRs in decreasing order of reliability. Permute the columns of the pruned PCM using the same order.
	
	\item Mark the first $K$ columns as `fixed'. Now, perform stage 2 in the algorithm in Section \ref{sec:parImp} (we could have used instead the first variant of the algorithm, presented in the beginning of Section \ref{sec:EffML}, stage 2), with the $K$ fixed columns instead of the $n_d$ (BP) decoded variables that we had in Section \ref{sec:parImp} and with $n_c=0$. We start by performing diagonal extension steps, then choose reference variables and perform additional diagonal extension steps iteratively. In the end of that stage we obtain the PCM in Fig. \ref{fig:PCMAfterDiagOSD}. We examined two criteria for choosing new reference variables, as described below.
	
	\item Permute the rows of the PCM that we obtained by moving the $n_r$ rows at the bottom to the top of the PCM as can be seen in Fig. \ref{fig:PCMAfterDiagOSDReorderd}.
	\begin{figure}[h]
		\centering
		\begin{subfigure}[t]{.4\linewidth}
			\hspace*{-1cm}\includegraphics[width=\linewidth]{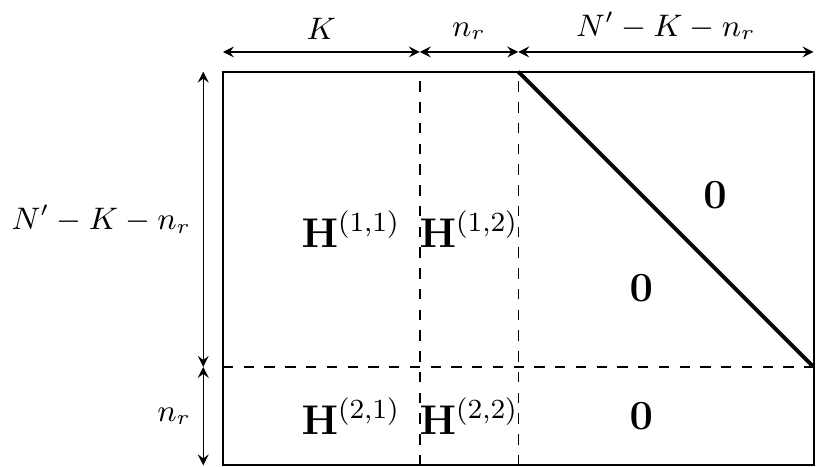}
			\caption{The PCM after choosing reference variables and diagonalizing.}
			\label{fig:PCMAfterDiagOSD}
		\end{subfigure}
		\begin{subfigure}[t]{.4\linewidth}
			\hspace*{-0.7cm}\includegraphics[width=\linewidth]{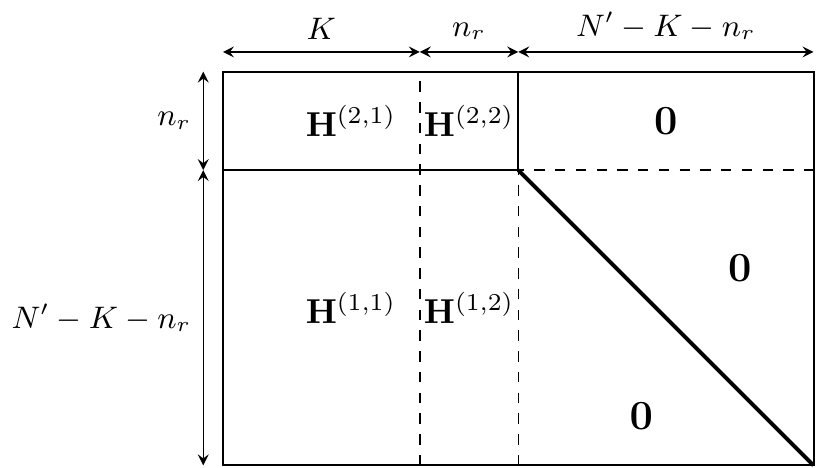}
			\caption{The row permuted PCM.}
			\label{fig:PCMAfterDiagOSDReorderd}
		\end{subfigure}
		\caption{The PCM in stage 3 of our OSD algorithm. The bold diagonals are filled with ones.}
		\label{fig:OSDPCM}
	\end{figure}
	Then, apply Gaussian elimination on the $n_r \times (K+n_r)$ upper left part of the matrix in Fig. \ref{fig:PCMAfterDiagOSDReorderd}, denoted by $\left[ \bH^{(2,1)} \: | \: \bH^{(2,2)} \right]$, with possible column permutations to bring the result to the form $\left[ \tilde{\bH}^{(2,1)} \: | \: \bI_{n_r} \right]$, where $\tilde{\bH}^{(2,1)}$ is $n_r \times K$ and $\bI_{n_r}$ is the identity matrix of order $n_r$. We claim that this is always possible. To prove this claim first recall that by Proposition \ref{prop:pruned_PCM_full_rank}, the pruned PCM has full row rank. Also, stage 2 in the algorithm in Section \ref{sec:parImp} does not change the row rank. Hence, the resulting PCM shown in Fig. \ref{fig:PCMAfterDiagOSD} also has full row rank, and, by the structure of this matrix, $\left[ \bH^{(2,1)} \: | \: \bH^{(2,2)} \right]$ must also have full row rank, $n_r$. This proves our claim.
	We implement the Gaussian elimination together with column permutation on $\left[ \bH^{(2,1)} \: | \: \bH^{(2,2)} \right]$ such that the $K$ leftmost columns (corresponding to $\tilde{\bH}^{(2,1)}$) are those associated with the more reliable BP decoded variables. By imposing this, the first $K$ columns of the final PCM will be the MRIB provided that our criterion for choosing reference variables in stage 3 is to always select the remaining codeword variables with the highest reliabilities. We have shown how to bring the PCM to the form \eqref{eq:OSDH}, but its dimensions are $(N'-K)\times N'$ rather than $(N-K) \times N$ (where $N'\ge N$) since it includes extra variables beyond the $N$ codeword variables (variable nodes in the FG description of the polar code that were not pruned by the pruning algorithm). However, in the reprocessing stage of the OSD we only care about codeword variables. Hence, at this point we simply erase the $N'-N$ rows and columns, corresponding to variable nodes which are not codeword variables, from the PCM (more precisely, if column $K+i$, for $i\in\{1,\ldots,N'-K\}$, is not a codeword variable node, we erase it together with row $i$) and obtain the final PCM \eqref{eq:OSDH} with dimensions $(N-K) \times N$.
	
	\item To generate the list of candidate codewords required by the reprocessing stage of the OSD algorithm, we follow the description in Section \ref{sec:osd_background}, \eqref{eq:c01j}-\eqref{eq:ci_1_2}. For the case $q=1$ we can rewrite this as follows. Let $\bh_0 \defined \mathbf{0}$ and let $\bh_i$, $i=1,\ldots,K$, be the $i$'th column of the final PCM in \eqref{eq:OSDH}. Also, $\be_0\defined \mathbf{0}$ and $\be_i$, $i=1,\ldots,K$, is the $i$-th unity vector of length $K$. Then, by \eqref{eq:ci_1_2}, $\bc_i = [\bc_i^{(1)},\bc_i^{(2)}]$, $i=0,1,\ldots,K$, is given by,
	\begin{align}
	\bc_i^{(1)} = \bc_0^{(1)} + \be_i \quad , \quad
	\bc_i^{(2)} = {\bc}_0^{(2)} + \bh_i
	\label{eq:ci_1_2A}
	\end{align}
	where $\bc_0^{(1)}$ is given by \eqref{eq:c01j} and ${\bc}_0^{(2)} = \bA {\bc}_0^{(1)}$.
	
	\item Finally, the decoder outputs $\lambda^{-1}({\bc}_{i_0})$ for
	\bre
	i_0 = \argmin_{0\le i \le K} ||\text{BPSK}(\bc_i) - \tilde{\by} ||^2
	\label{eq:i_0}
	\ere
	Denoting by $c_{i,l}$ ($c^{(1)}_{0,l}$, $c^{(2)}_{0,l}$, $\tilde{y}_l$, $e_{i,l}$, $h_{i,l}$, respectively) the $l$'th component of $\bc_{i}$ ($\bc^{(1)}_0$, $\bc^{(2)}_0$, $\tilde{\by}$, $\be_{i}$, $\bh_i$), we have
	\begin{align}
	i_0 &= \argmax_{0\leq i\leq K}\sum_{l=1}^{N} (-1)^{c_{i,l}} \tilde{y}_l
	= \argmax_{0\leq i\leq K} \left\{ 
	\sum_{l=1}^{K} (-1)^{c^{(1)}_{0,l}} (-1)^{e_{i,l}} \tilde{y}_l +
	\sum_{l=1}^{N-K} (-1)^{c^{(2)}_{0,l}} (-1)^{h_{i,l}} \tilde{y}_{K+l} \right\}
	\\
	&= \argmax_{0\leq i\leq K} \left\{ 
	-\sum_{l=1}^{K} (-1)^{c^{(1)}_{0,l}} \tilde{y}_l +
	\sum_{l=1}^{K} (-1)^{c^{(1)}_{0,l}} (-1)^{e_{i,l}} \tilde{y}_l +
	\sum_{l=1}^{N-K} (-1)^{c^{(2)}_{0,l}} (-1)^{h_{i,l}} \tilde{y}_{K+l}
	\right\}
	\\
	&= \argmax_{0\leq i\leq K} \left\{
	-2\cdot (-1)^{c^{(1)}_{0,i}} \tilde{y}_i + \sum_{l=1}^{N-K}s_l\cdot(-1)^{h_{i,l}}
	\right\}
	\label{eq:minL2}
	\end{align}
	where $\tilde{y}_0 \defined 0$ and
	\bre
	s_l \defined \tilde{y}_{K+l}\cdot(-1)^{c^{(2)}_{0,l}}, \quad l=1,2,\ldots,N-K \: .
	\label{eq:sdef}
	\ere
	To verify the last equality in \eqref{eq:minL2} note that $e_{i,l}=1$ if $i=l$, otherwise $e_{i,l}=0$.
	Using \eqref{eq:minL2}, the reprocessing, which can be implemented in parallel, requires $(N-K)(K+1)$ additions / subtractions.
\end{enumerate}

It should be noted that all the manipulations of the PCM required by our efficient Gaussian elimination algorithm are simple operations on binary data. To implement the Gaussian elimination required in stage 4 efficiently, we can use the parallel Gaussian elimination over GF(2) algorithm proposed in \cite{FastGausElim}. As will be shown in the simulations, $n_r$ is typically considerably smaller than $K$ and $N-K$, so that this Gaussian elimination is much less demanding compared to plain Gaussian elimination of a standard $K \times N$ polar generator matrix or a standard $(N-K) \times N$ polar PCM described in Section \ref{sec:osd_background}.

The same algorithm can be used for BPL-OSD or CBPL-OSD decoding by applying the same procedure of BP or CBP followed by OSD decoding (with efficient Gaussian elimination) on several FG permutations (or permuted inputs and outputs \cite{BPPermutedWarren}) in parallel.

\subsection{Lower comlexity CBP-OSD and CBPL-OSD} \label{sec:cbpllcosd}
We now suggest a lower complexity variant of the decoder in Section \ref{sec:cbplosd}.  
We first apply stages 1-3 of the decoder, thus obtaining the PCM in Fig. \ref{fig:PCMAfterDiagOSD}. We skip stage 4 that includes the Gaussian elimination of a matrix of dimensions $n_r \times (K + N_r)$ and proceed directly to the reprocessing part of OSD (stages 5 and 6). However, the reprocessing is modified as follows.
We compute the hard decoded bits, $\bc_0^{(1)}$, corresponding to the first $K+n_r$ variable nodes in the PCM in Fig. \ref{fig:PCMAfterDiagOSD} similarly to \eqref{eq:c01j} and enumerate over all $K+n_r+1$ error patterns, $\be_i$, of weight at most 1. For each vector, $\bc_0^{(1)}+\be_i$, that satisfies the $n_r$ parity check constraints at the bottom of that PCM, we calculate the remaining $N-K-n_r$ codeword variable nodes (as in stage 4 of the decoder in Section \ref{sec:cbplosd}, we erase from the PCM the $N'-N$ rows and columns corresponding to variable nodes which are not codeword variables). Similarly to \eqref{eq:i_0}, the decoded codeword is the one with minimum $\cL_2$ distance to $\tilde{\by}$.

\subsection{Partial higher order OSD reprocessing} \label{sec:pool}
So far we have focused on CBPL-OSD(1), as it has the lowest computational complexity. Yet, a lower error rate may be achieved by using CBPL-OSD($q$) for $q>1$. In this section, we propose approximating higher order reprocessing by performing it only on the bits that are associated with the least reliable LLRs. We demonstrate this approximation for CBPL-OSD(2), and show that it can be used to achieve almost the same performance as that of regular CBPL-OSD(2), while significantly reducing the reprocessing complexity.

Recalling \eqref{eq:ci_1_2A} and using the same notation as in Section \ref{sec:cbplosd}, let us further denote by $\bc_{i,j} = [\bc_{i,j}^{(1)},\bc_{i,j}^{(2)}]$ the codeword after flipping the $i$'th and $j$'th bits in $\bc_0^{(1)}$, such that
\begin{equation}
\bc_{i,j}^{(1)} = \bc_0^{(1)} + \be_i + \be_j , \quad
\bc_{i,j}^{(2)} = {\bc}_0^{(2)} + \bh_i + \bh_j \: .
\label{eq:cij_1_2}
\end{equation}
The main bottleneck of order-2 reprocessing is the search for the codeword $\bc_{i,j}$ with BPSK closest to the permuted channel output $\tilde{\by}$ in terms of Euclidean distance,
\begin{equation}
\argmin_{1\leq i<j\leq K}\norm{\text{BPSK}(\bc_{i,j}) - \tilde{\by}}^2 \: .
\label{eq:minL2A}
\end{equation}
To determine the final CBPL-OSD(2) estimate, the result of the search \eqref{eq:minL2A} needs to be compared to the CBPL-OSD(1) estimate in \eqref{eq:minL2} by minimum Euclidean distance to $\tilde{\by}$.

In \eqref{eq:minL2A} we search over $\binom{K}{2}=[K(K-1)] / 2$ pairs of indices $(i,j)$. Similarly to \eqref{eq:minL2}, we can rewrite \eqref{eq:minL2A} as,
\begin{align}
\lefteqn{\argmin_{1\leq i<j\leq K}\norm{\text{BPSK}(\bc_{i,j}) - \tilde{\by}}^2 =}\\
&\qquad
\argmax_{1\leq i<j\leq K} \left\{ -2\cdot (-1)^{c^{(1)}_{0,i}} \tilde{y}_i
-2\cdot (-1)^{c^{(1)}_{0,j}} \tilde{y}_j 
+ \sum_{l=1}^{N-K} s_l\cdot(-1)^{h_{i,l}}\cdot(-1)^{h_{j,l}} \right\} \: .
\label{eq:minL2B}
\end{align}
Using the RHS of \eqref{eq:minL2B}, the complexity of this search, and thus of order-2 reprocessing, is about $(N-K) K^2 / 2$ additions.

The RHS of \eqref{eq:minL2B} can be expressed using matrix multiplication (that only requires additions / subtractions, and can be implemented efficiently in parallel) as follows.
First define a $K\times (N-K)$ matrix $\bA$ such that the $i$'th row, $i=1,\ldots,K$, of $\bA$ is $\bs \cdot (-1)^{\bh_i}$ (element-wise multiplication and exponentiation). Next, define an $(N-K) \times K$ matrix $\bB$ such that the $j$'th column, $j=1,\ldots,K$, of $\bB$ is $(-1)^{\bh_j}$.
Finally define the $K\times K$ matrix $\bD$ by $\bD = \bA \cdot \bB$.
Then the RHS of \eqref{eq:minL2B} can be written as
\begin{equation}
\argmax_{1 \le i < j \le K} \left\{ -2\cdot (-1)^{c^{(1)}_{0,i}} \tilde{y}_i
-2\cdot (-1)^{c^{(1)}_{0,j}} \tilde{y}_j + d_{i,j} \right\} \: .
\end{equation}
A similar formulation using matrix multiplication applies to the partial reprocessing method that we now suggest.

Recall the permuted (by $\lambda$) LLRs $(\tilde{\ell}_1,\ldots,\tilde{\ell}_K)$, corresponding to the MRIB, defined in Section \ref{sec:osd_background}. To reduce the complexity we propose the following approximation to the search in \eqref{eq:minL2B}. For some integer $M < [K(K-1)] / 2$, we perform this search only on the $M$ pairs of indices $(i,j)$ with the lowest values of $|\tilde{\ell}_i|+|\tilde{\ell}_j|$. That is, we only consider the $M$ pairs $(i,j)$ which are associated with the least reliable LLRs (LLRs with the lowest absolute values), in the maximization \eqref{eq:minL2B}. Since $|\tilde{\ell}_i| > |\tilde{\ell}_j|$ for $1\leq i < j \leq K$, this is implemented by enumerating over the indices $(i,j)$ during reprocessing in the following decreasing order: $i=K-1,K-2,\ldots$ in the outer loop and $j=K,K-1,\ldots,i+1$ in the inner loop, until we have exhausted $M$ such pairs. This approximation can be straightforwardly extended for higher orders. In the case of order 2, it reduces the complexity of the search over the pairs from about $(N-K) K^2 / 2$ to about $(N-K) M$ additions / subtractions. 

\subsection{Simulation results}
We present simulation results using our efficient CBPL-OSD decoder. We used a CRC-polar code of total rate $1/2$, and CRC of length $6$.
The maximal number of iterations for all BP decoders was set to $I_{\max}=100$. We start incorporating CRC information after $I_{\thr}=10$ iterations.
We compared the following decoders,
\begin{enumerate}
	\item (CRC-aided) SCL(8) with list size $L=8$.
	\item CBPL(6) \cite{BPL,PolarBPCRCBrink} with list size $L=6$ that includes all the permutations of the final $3$ stages of the FG.
	\item CBPL(6)-OSD(1), our proposed decoder described above with list size $L=6$.
	\item CBPL(6)-LCOSD(1), our lower complexity decoder described in Section \ref{sec:cbpllcosd}.
	\item CBPL(6)-OSD(2), similar to CBPL(6)-OSD(1), but the OSD now has order 2.
	\item CBPL(6)-POSD(2,$1/4$), similar to CBPL(6)-OSD(2), but now with partial order 2 reprocessing as described in Section \ref{sec:pool}, using only a quarter of the number of pairs compared to CBPL(6)-OSD(2).
\end{enumerate}
We simulated the decoding process for code blocklengths $128$, $256$, and $512$.
Fig. \ref{fig:OSDStats} presents the BER, FER and average total number of reference variables for the case where a single reference variable is chosen each time the diagonal cannot be further extended, i.e., $n'_r=1$.
We examined two methods for choosing the reference variables. The simplest approach is to choose the CVN with the highest reliability from the remaining unknown CVNs. The second approach, which is a variant of \cite[Method C]{LDPC_ML}, is to consider some PC with the smallest number of unknown variables (must be at least two since otherwise, we could have continued the diagonal extension) with at least one unknown CVN, and select the unknown CVN with the highest reliability as the reference variable. The results in Fig. \ref{fig:OSDStats} were obtained using the second approach, which was slightly better in terms of the average number of reference variables compared to the first one. The BER and FER performances of both methods were essentially the same.
Note that the average total number of reference variables was computed only over those cases where the early termination criterion of CBPL did not occur (i.e., only when OSD was required). \emph{These averages are significantly lower when all cases are included.}
It can be seen that the total number of reference variables is considerably smaller than $K$ and $N-K$.
Comparing with Gaussian elimination of a dense generator matrix of size $K \times N$, or a dense PCM of size $(N-K) \times N$, which is required by standard OSD, we observe a significant improvement in computational complexity.

We also observe that the lower complexity (compared to CBPL(6)-OSD(1)) CBPL(6)-LCOSD(1) decoder still provides a significant reduction in the error rate compared to plain CBPL(6).
Finally, the error rate of CBPL(6)-POSD(2,$1/4$) is very similar to that of CBPL(6)-OSD(2), although it only requires a quarter of the reprocessing complexity.

In addition to the experiment results shown in Fig. \ref{fig:OSDStats}, we have also compared CBPL(6)-OSD(1) to plain CBPL($L$) with a large value of $L$. Our results show that even for $L=64$, CBPL(6)-OSD(1) has a lower error rate (the blocklength was $N=256$).

\begin{figure}
	\centering
	\begin{subfigure}[b]{.3\linewidth}
		\includegraphics[width=\linewidth]{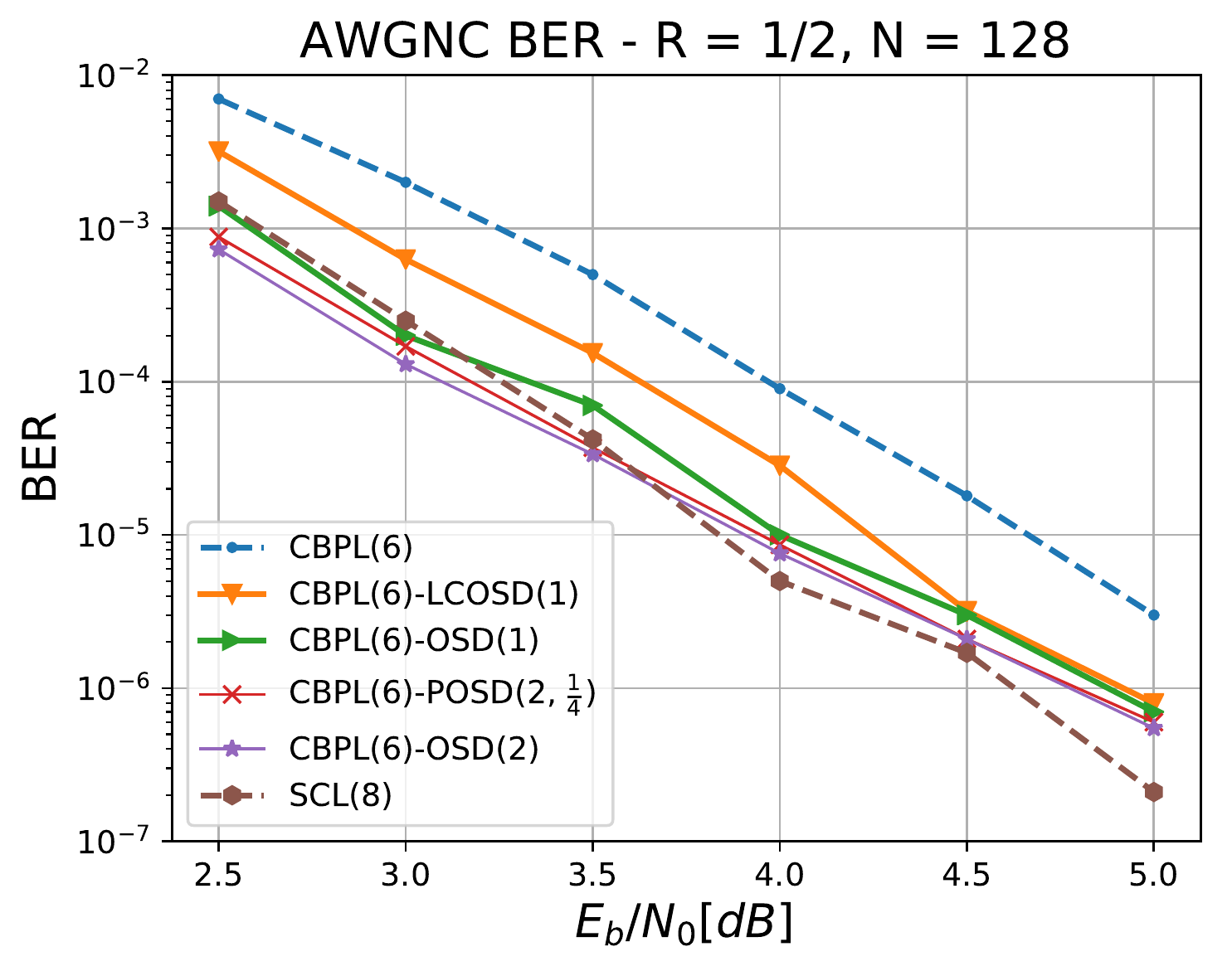}
	\end{subfigure}
	\hspace{0.006\textwidth}
	\begin{subfigure}[b]{.3\linewidth}
		\includegraphics[width=\linewidth]{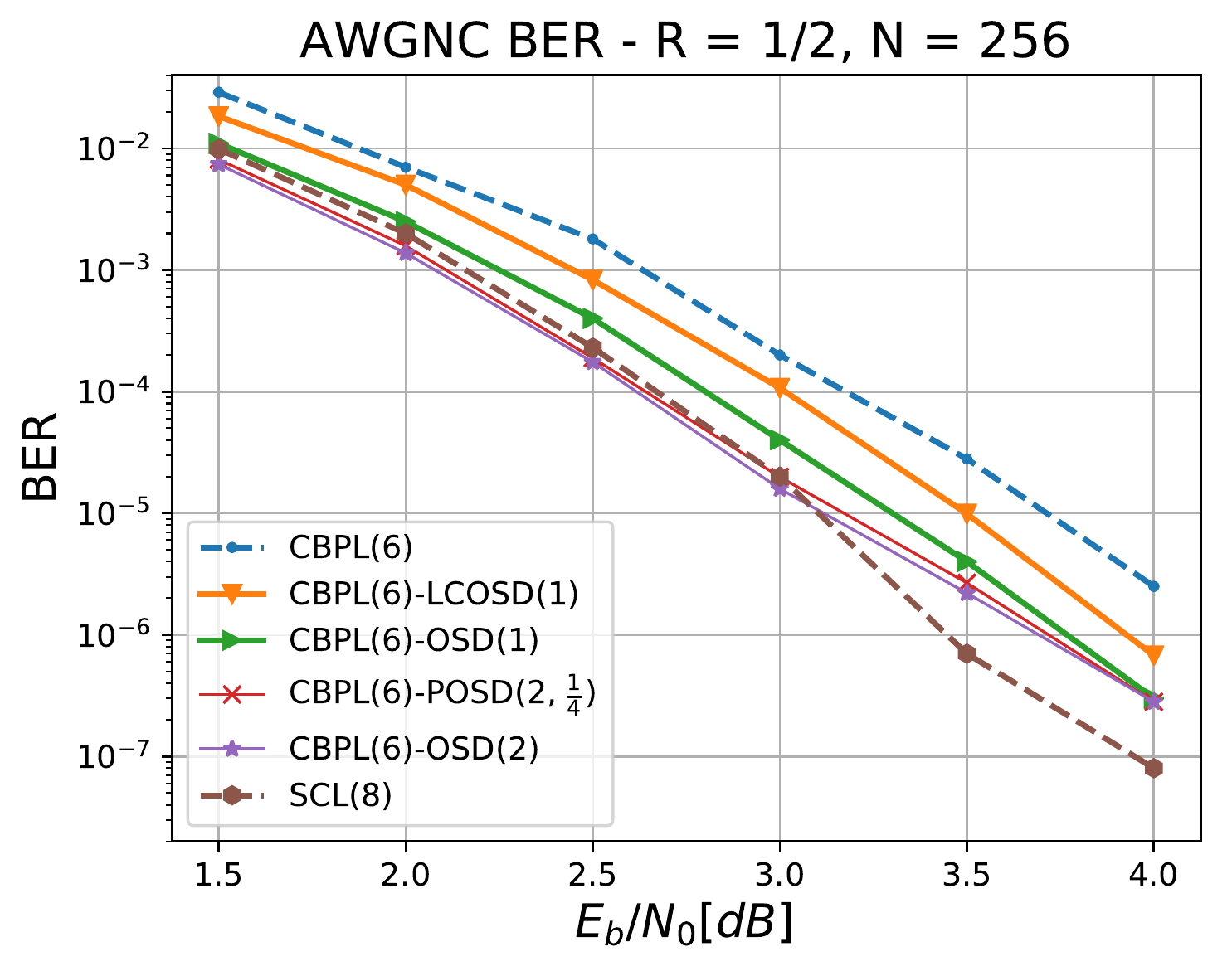}
	\end{subfigure}
	\hspace{0.006\textwidth}
	\begin{subfigure}[b]{.3\linewidth}
		\includegraphics[width=\linewidth]{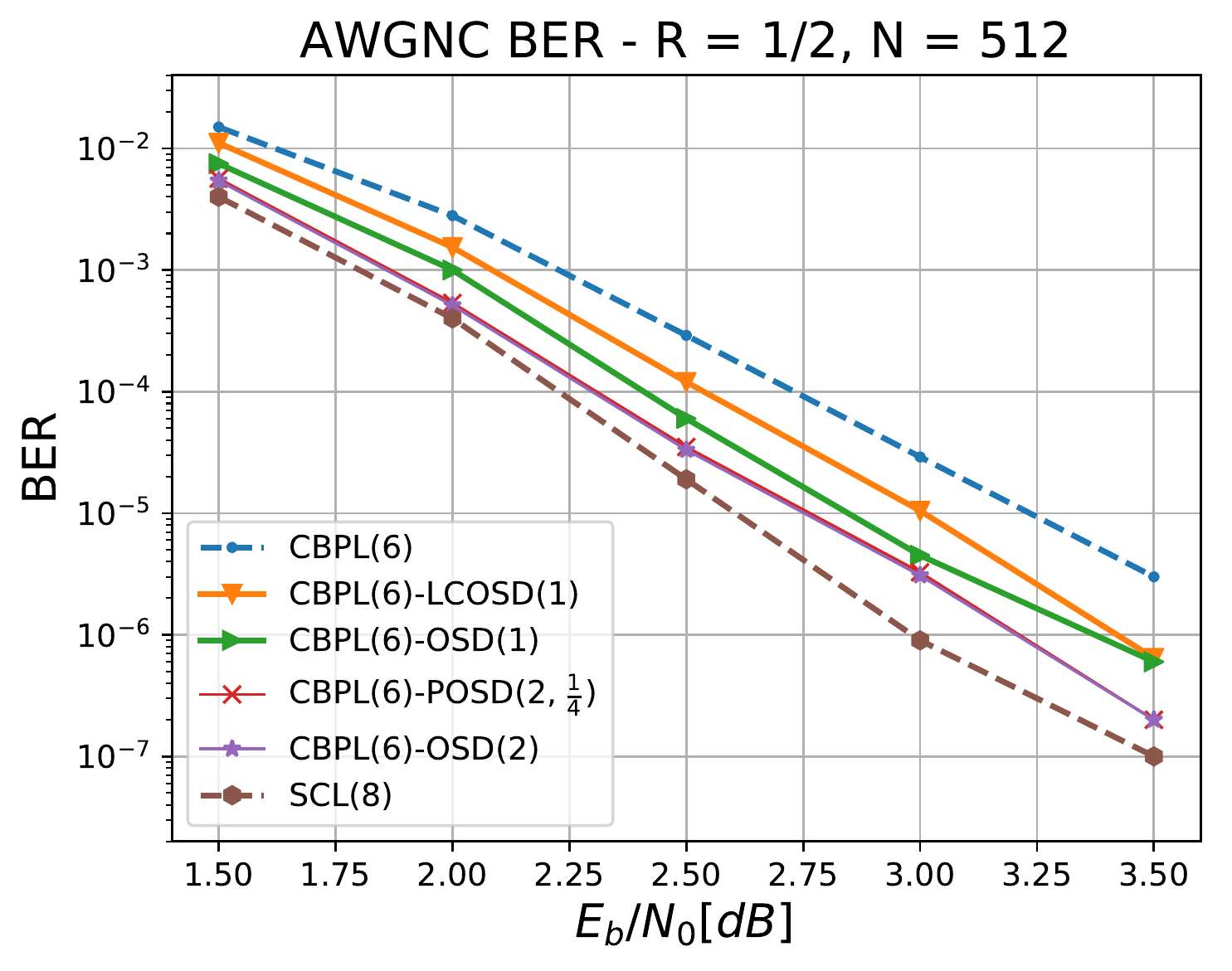}
	\end{subfigure}
	
	\vspace{2ex}
	
	\begin{subfigure}[b]{.3\linewidth}
		\includegraphics[width=\linewidth]{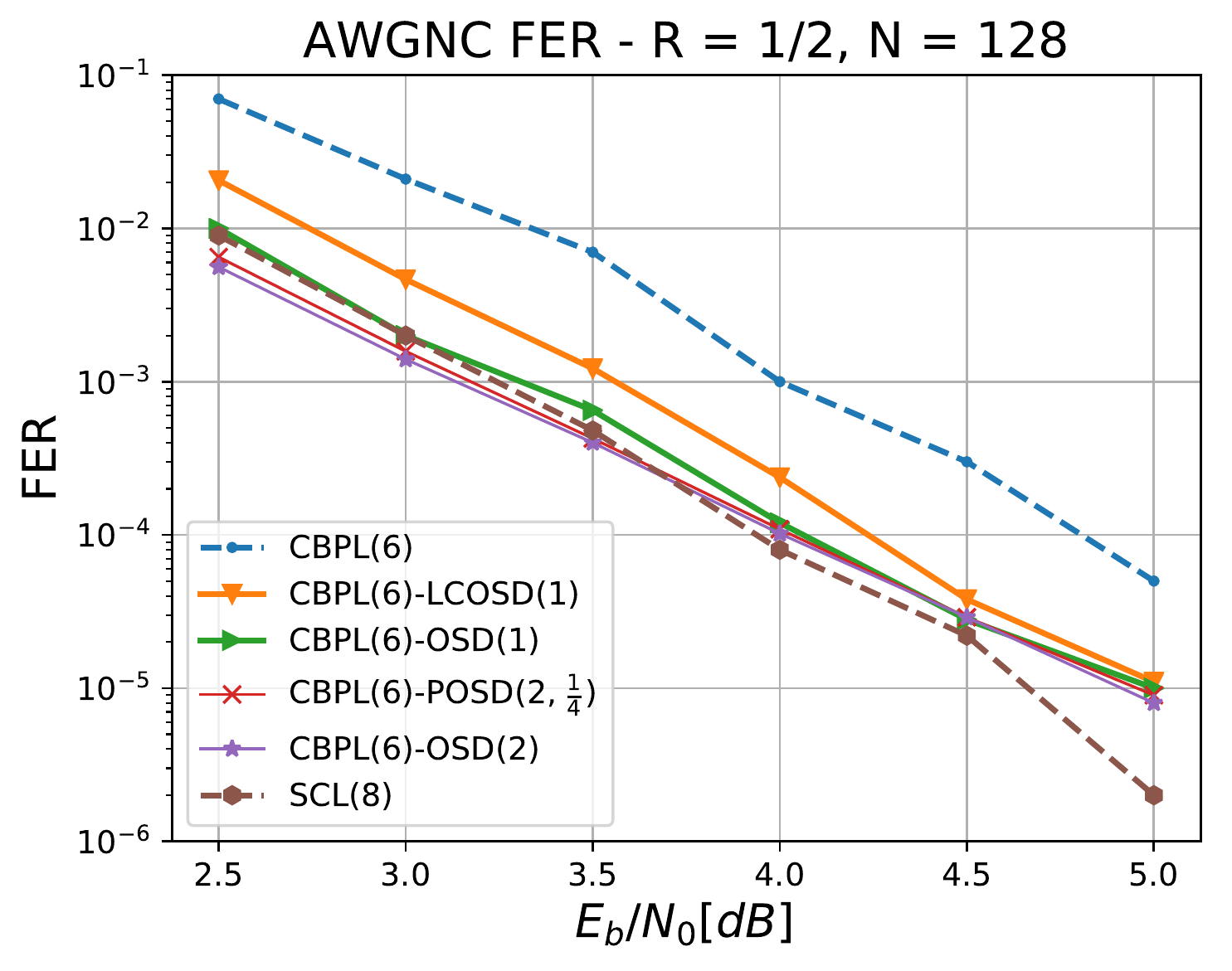}
	\end{subfigure}
	\hspace{0.006\textwidth}
	\begin{subfigure}[b]{.3\linewidth}
		\includegraphics[width=\linewidth]{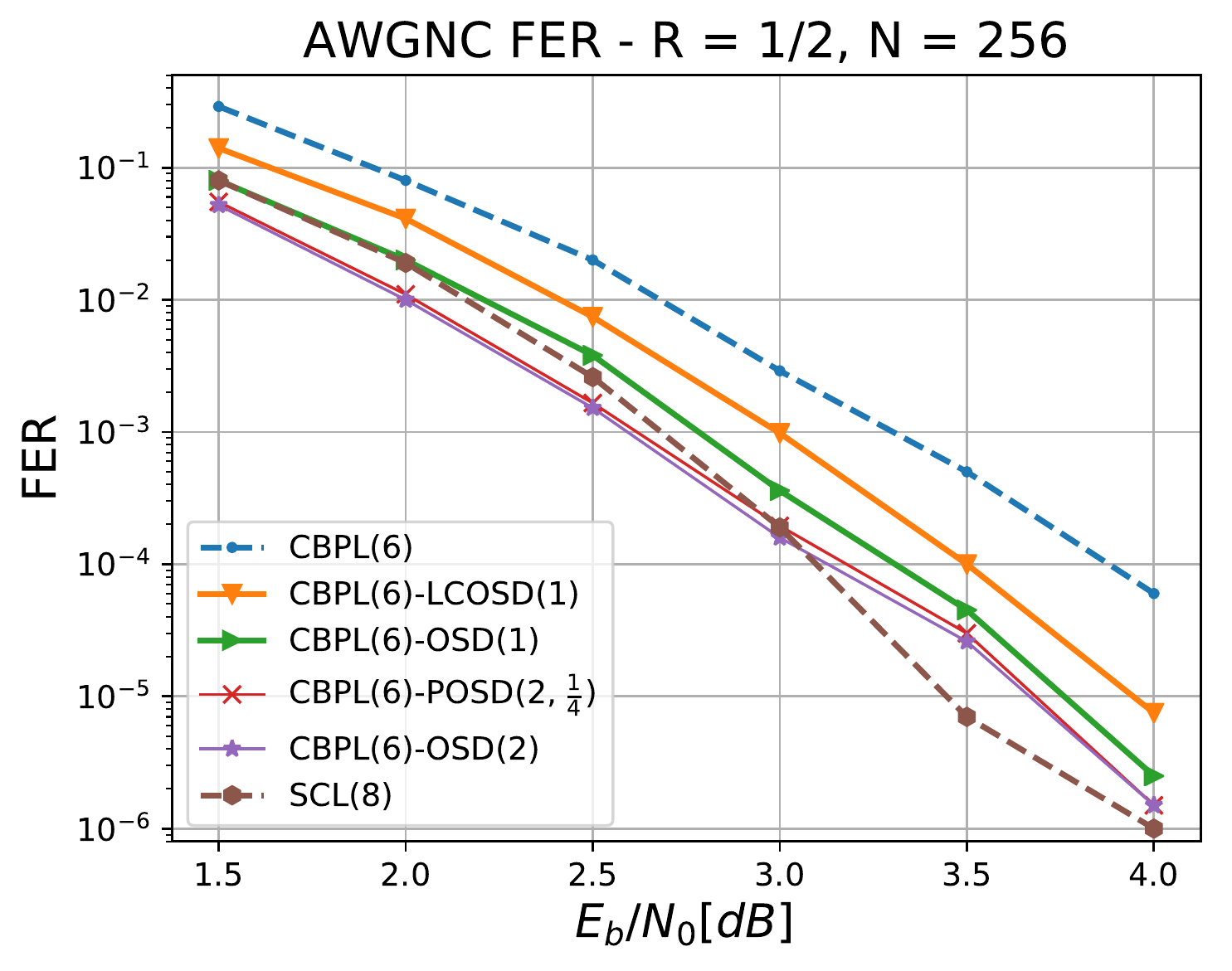}
	\end{subfigure}
	\hspace{0.006\textwidth}
	\begin{subfigure}[b]{.3\linewidth}
		\includegraphics[width=\linewidth]{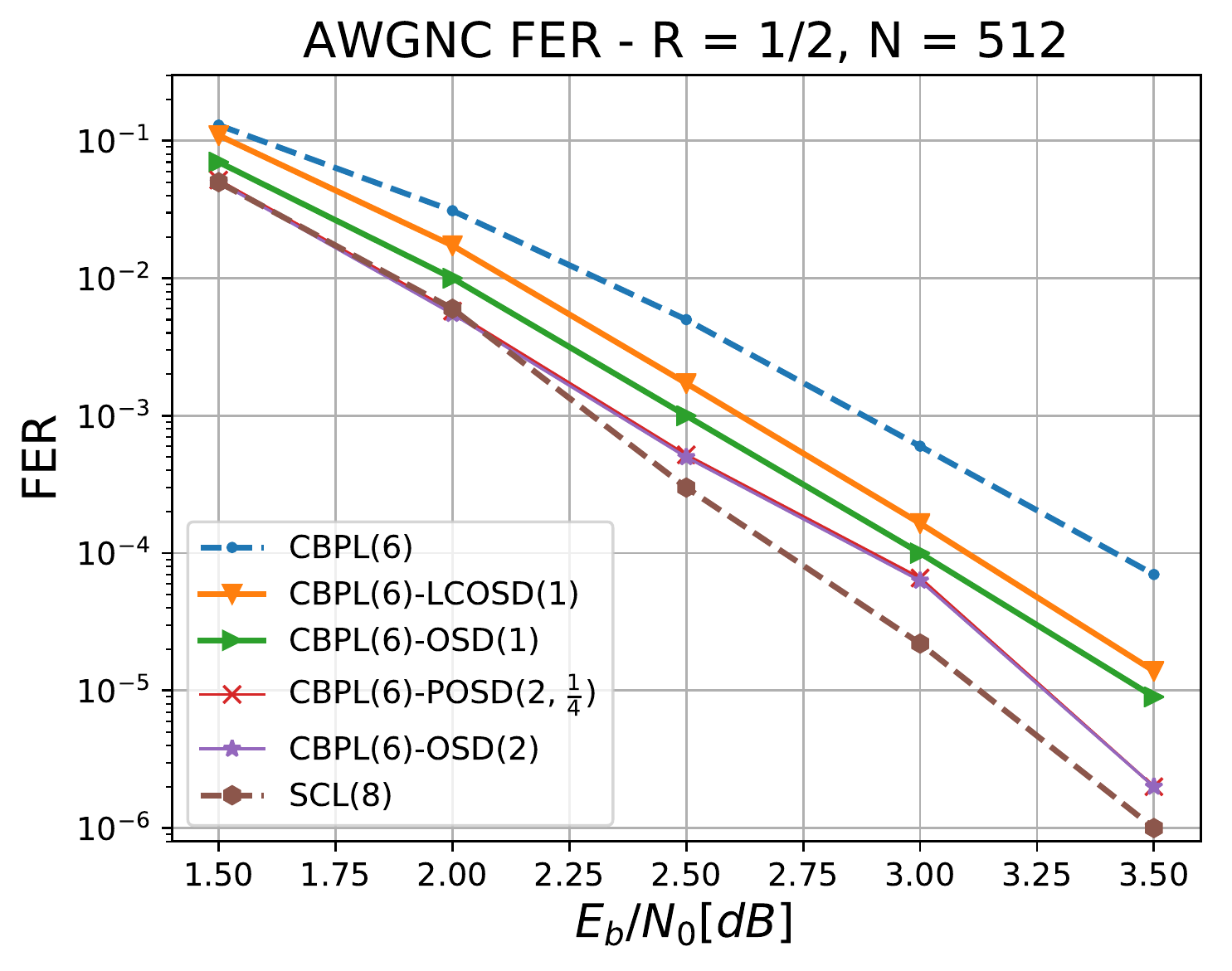}
	\end{subfigure}
	
	\vspace{2ex}
	
	\begin{subfigure}[b]{.3\linewidth}
		\includegraphics[width=\linewidth]{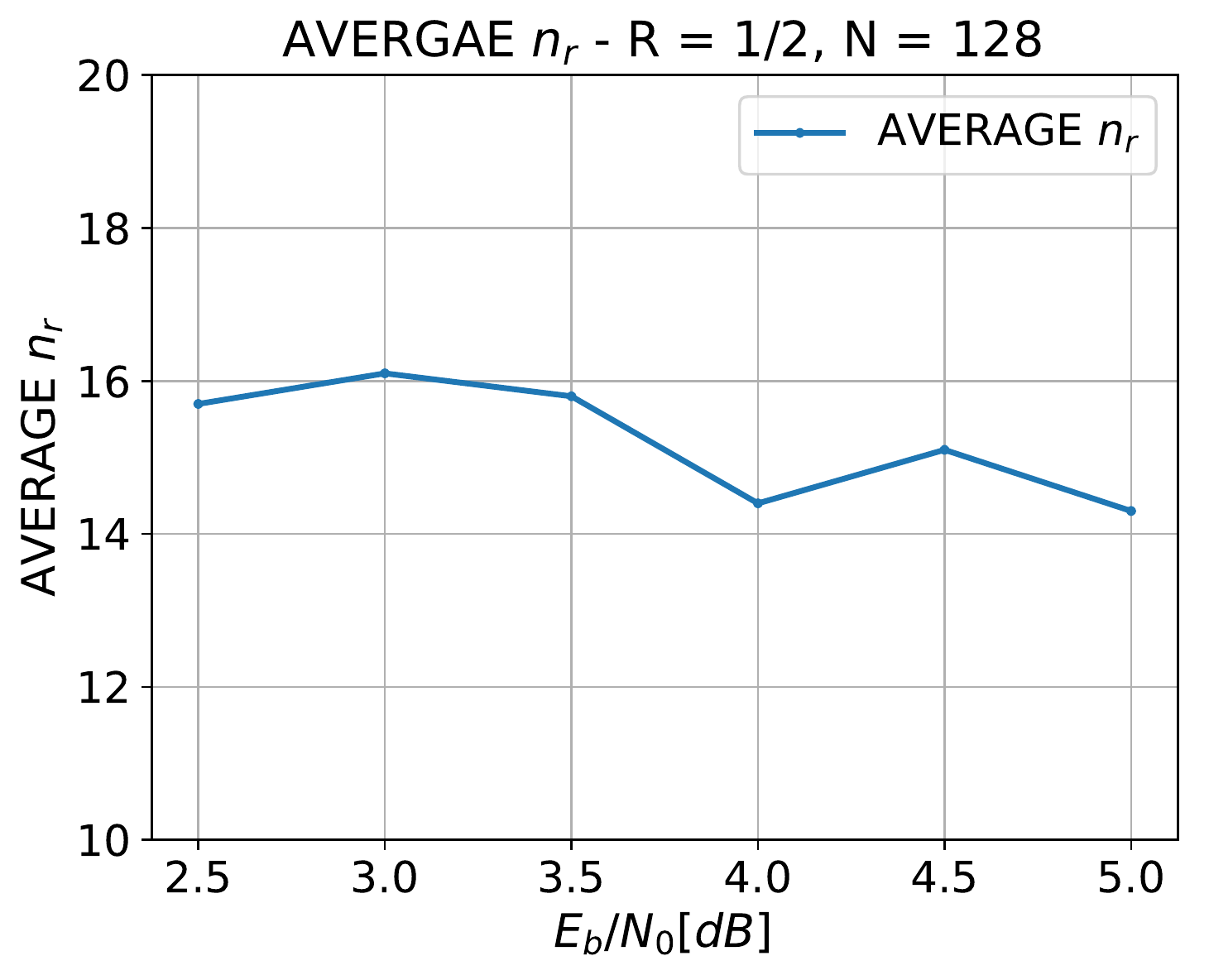}
	\end{subfigure}
	\hspace{0.006\textwidth}
	\begin{subfigure}[b]{.3\linewidth}
		\includegraphics[width=\linewidth]{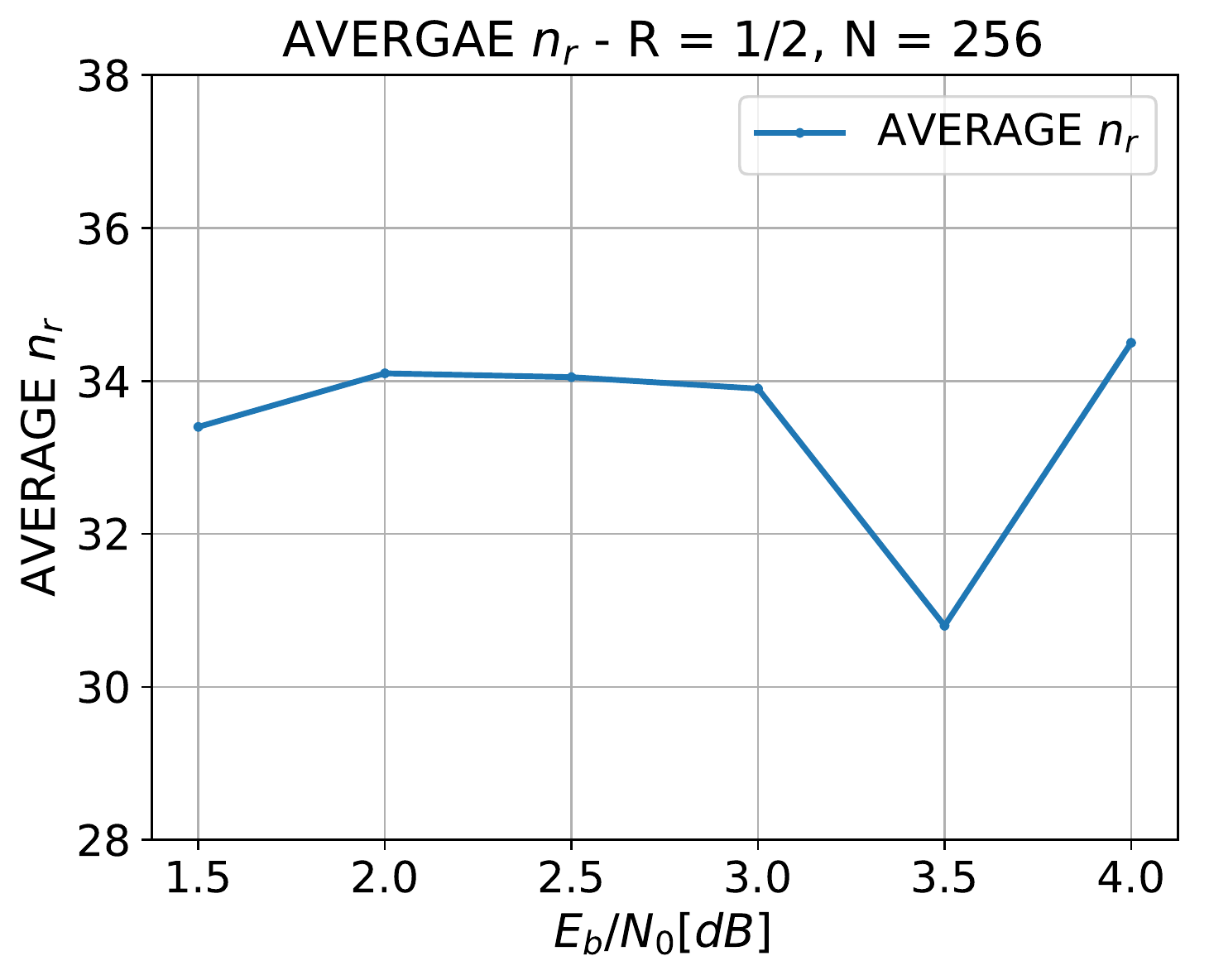}
	\end{subfigure}
	\hspace{0.006\textwidth}
	\begin{subfigure}[b]{.3\linewidth}
		\includegraphics[width=\linewidth]{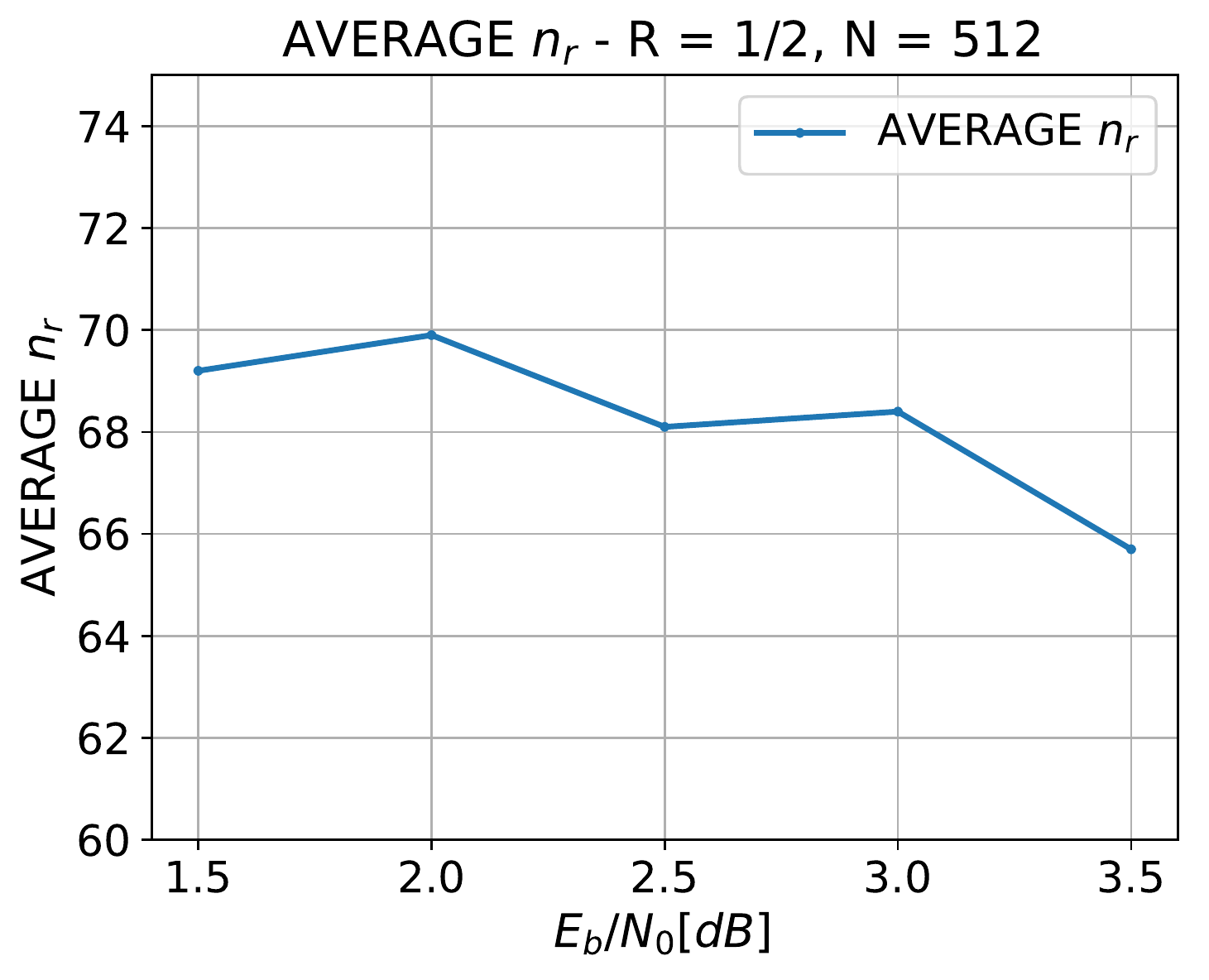}
	\end{subfigure}
	\caption{BER, FER, and average $n_r$, for CRC concatenated polar codes with rate $1/2$ and different blocklengths over the AWGNC.}
	\label{fig:OSDStats}
\end{figure}

\section{Conclusions} \label{sec:conclusions}
We have presented new decoders for (possibly CRC-augmented) polar codes transmitted over the BEC and AWGNC.
For the BEC, the algorithm computes the exact ML solution with average asymptotic computational complexity which is essentially the same as that of standard BP decoding.
For the AWGNC, we derived a novel decoder that combines CBPL and OSD decoding with a very low reprocessing order. A variant of the ML decoder derived for the BEC was used in order to efficiently implement the Gaussian elimination required by the first stage of OSD.
We demonstrated improved error rate performance and low computational complexity both for the BEC and for the AWGNC using numerical simulations.
For the AWGNC, CBPL-OSD(1) showed an improvement of $0.5-0.8$ dB compared to CBPL in the high SNR region.


\begin{thebibliography}{10}
	\providecommand{\url}[1]{#1}
	\csname url@samestyle\endcsname
	\providecommand{\newblock}{\relax}
	\providecommand{\bibinfo}[2]{#2}
	\providecommand{\BIBentrySTDinterwordspacing}{\spaceskip=0pt\relax}
	\providecommand{\BIBentryALTinterwordstretchfactor}{4}
	\providecommand{\BIBentryALTinterwordspacing}{\spaceskip=\fontdimen2\font plus
		\BIBentryALTinterwordstretchfactor\fontdimen3\font minus
		\fontdimen4\font\relax}
	\providecommand{\BIBforeignlanguage}[2]{{%
			\expandafter\ifx\csname l@#1\endcsname\relax
			\typeout{** WARNING: IEEEtran.bst: No hyphenation pattern has been}%
			\typeout{** loaded for the language `#1'. Using the pattern for}%
			\typeout{** the default language instead.}%
			\else
			\language=\csname l@#1\endcsname
			\fi
			#2}}
	\providecommand{\BIBdecl}{\relax}
	\BIBdecl
	
	\bibitem{PolarCodes}
	E.~Arikan, ``Channel polarization: A method for constructing capacity-achieving
	codes for symmetric binary-input memoryless channels,'' \emph{IEEE
		Transactions on Information Theory}, vol.~55, no.~7, pp. 3051--3073, Jul
	2009.
	
	\bibitem{SCL}
	I.~{Tal} and A.~{Vardy}, ``List decoding of polar codes,'' \emph{IEEE
		Transactions on Information Theory}, vol.~61, no.~5, pp. 2213--2226, 2015.
	
	\bibitem{sarkis2014fast}
	G.~Sarkis, P.~Giard, A.~Vardy, C.~Thibeault, and W.~J. Gross, ``Fast polar
	decoders: Algorithm and implementation,'' \emph{IEEE Journal on Selected
		Areas in Communications}, vol.~32, no.~5, pp. 946--957, 2014.
	
	\bibitem{hashemi2018decoder}
	S.~A. Hashemi, M.~Mondelli, S.~H. Hassani, C.~Condo, R.~L. Urbanke, and W.~J.
	Gross, ``Decoder partitioning: Towards practical list decoding of polar
	codes,'' \emph{IEEE Transactions on Communications}, vol.~66, no.~9, pp.
	3749--3759, 2018.
	
	\bibitem{Arkan2010PolarC}
	E.~Arikan, ``Polar codes : A pipelined implementation,'' in \emph{Proc. 4th
		Int. Symp. on Broad. Commun. (ISBC)}, 2010, pp. 11--14.
	
	\bibitem{polar_vs_reed}
	E.~{Arikan}, ``{A performance comparison of polar codes and {R}eed-{M}uller
		codes},'' \emph{IEEE Communications Letters}, vol.~12, no.~6, pp. 447--449,
	2008.
	
	\bibitem{eslami2010on}
	A.~Eslami and H.~Pishro-Nik, ``On bit error rate performance of polar codes in
	finite regime,'' in \emph{2010 48th Annual Allerton Conference on
		Communication, Control, and Computing (Allerton)}, 2010, pp. 188--194.
	
	\bibitem{BP_arc}
	B.~{Yuan} and K.~K. {Parhi}, ``{Architecture optimizations for BP polar
		decoders},'' in \emph{2013 IEEE International Conference on Acoustics, Speech
		and Signal Processing}, 2013, pp. 2654--2658.
	
	\bibitem{Polar_LDPC_conc}
	J.~{Guo}, M.~{Qin}, A.~{Guillen i Fabregas}, and P.~H. {Siegel}, ``Enhanced
	belief propagation decoding of polar codes through concatenation,'' in
	\emph{2014 IEEE International Symposium on Information Theory}, 2014, pp.
	2987--2991.
	
	\bibitem{bp_early_term}
	B.~{Yuan} and K.~K. {Parhi}, ``Early stopping criteria for energy-efficient
	low-latency belief-propagation polar code decoders,'' \emph{IEEE Transactions
		on Signal Processing}, vol.~62, no.~24, pp. 6496--6506, 2014.
	
	\bibitem{crc_early_term}
	Y.~{Ren}, C.~{Zhang}, X.~{Liu}, and X.~{You}, ``Efficient early termination
	schemes for belief-propagation decoding of polar codes,'' in \emph{2015 IEEE
		11th International Conference on ASIC (ASICON)}, 2015, pp. 1--4.
	
	\bibitem{PCForChannelSrc}
	N.~Hussami, S.~B. Korada, and R.~Urbanke, ``Performance of polar codes for
	channel and source coding,'' in \emph{IEEE International Symposium on
		Information Theory (ISIT)}, 2009, pp. 1488--1492.
	
	\bibitem{PolarBPCRCWarren}
	N.~Doan, S.~A. Hashemi, E.~N. Mambou, T.~Tonnellier, and W.~J. Gross, ``Neural
	belief propagation decoding of {CRC}-polar concatenated codes,'' in
	\emph{IEEE International Conference on Communications (ICC)}, 2019, pp. 1--6.
	
	\bibitem{BPPermuted}
	A.~Elkelesh, M.~Ebada, S.~Cammerer, and S.~ten Brink, ``Belief propagation
	decoding of polar codes on permuted factor graphs,'' in \emph{IEEE Wireless
		Communications and Networking Conference (WCNC)}, 2018, pp. 1--6.
	
	\bibitem{BPL}
	------, ``Belief propagation list decoding of polar codes,'' \emph{IEEE
		Communications Letters}, vol.~22, no.~8, pp. 1536--1539, Aug 2018.
	
	\bibitem{BPPermutedWarren}
	N.~Doan, S.~A. Hashemi, M.~Mondelli, and W.~J. Gross, ``On the decoding of
	polar codes on permuted factor graphs,'' in \emph{IEEE Global Communications
		Conference (GLOBECOM)}, 2018, pp. 1--6.
	
	\bibitem{yu2019belief}
	Y.~Yu, Z.~Pan, N.~Liu, and X.~You, ``Belief propagation bit-flip decoder for
	polar codes,'' \emph{IEEE Access}, vol.~7, pp. 10\,937--10\,946, 2019.
	
	\bibitem{PolarBPCRCBrink}
	M.~Geiselhart, A.~Elkelesh, M.~Ebada, S.~Cammerer, and S.~ten Brink,
	``Crc-aided belief propagation list decoding of polar codes,'' in \emph{2020
		IEEE International Symposium on Information Theory (ISIT)}, 2020, pp.
	395--400.
	
	\bibitem{SparseGraphsBPPolar}
	S.~Cammerer, M.~Ebada, A.~Elkelesh, and S.~ten Brink, ``Sparse graphs for
	belief propagation decoding of polar codes,'' in \emph{IEEE International
		Symposium on Information Theory (ISIT)}, 2018, pp. 1465--1469.
	
	\bibitem{LP_Polar_decoding}
	N.~{Goela}, S.~B. {Korada}, and M.~{Gastpar}, ``On {LP} decoding of polar
	codes,'' in \emph{2010 IEEE Information Theory Workshop}, 2010, pp. 1--5.
	
	\bibitem{OSD}
	M.~P.~C. {Fossorier} and {Shu Lin}, ``Soft-decision decoding of linear block
	codes based on ordered statistics,'' \emph{IEEE Transactions on Information
		Theory}, vol.~41, no.~5, pp. 1379--1396, 1995.
	
	\bibitem{valembois2004box}
	A.~Valembois and M.~Fossorier, ``Box and match techniques applied to
	soft-decision decoding,'' \emph{IEEE Transactions on Information Theory},
	vol.~50, no.~5, pp. 796--810, 2004.
	
	\bibitem{OSDPolar16}
	D.~Wu, Y.~Li, X.~Guo, and Y.~Sun, ``Ordered statistic decoding for short polar
	codes,'' \emph{IEEE Communications Letters}, vol.~20, no.~6, pp. 1064--1067,
	2016.
	
	\bibitem{trifonov2012efficient}
	P.~Trifonov, ``{Efficient design and decoding of polar codes},'' \emph{IEEE
		Transactions on Communication}, vol.~60, no.~11, pp. 3221--3227, 2012.
	
	\bibitem{goldin2019performance}
	D.~Goldin and D.~Burshtein, ``Performance bounds of concatenated polar coding
	schemes,'' \emph{IEEE Transactions on Information Theory}, vol.~65, no.~11,
	pp. 7131--7148, 2019.
	
	\bibitem{BPOSD}
	M.~P.~C. Fossorier, ``Iterative reliability-based decoding of low-density
	parity check codes,'' \emph{IEEE Journal on selected Areas in
		Communications}, vol.~19, no.~5, pp. 908--917, 2001.
	
	\bibitem{polar_ML_BEC_ISIT}
	Y.~Urman and D.~Burshtein, ``Efficient maximum likelihood decoding of polar
	codes over the binary erasure channel,'' in \emph{2021 IEEE International
		Symposium on Information Theory (ISIT)}, July 2021, accepted for
	presentation, arXiv preprint arXiv:2106.14753.
	
	\bibitem{pishro2004on}
	H.~Pishro-Nik and F.~Fekri, ``On decoding of low-density parity-check codes
	over the binary erasure channel,'' \emph{IEEE Transactions on Information
		Theory}, vol.~50, no.~3, pp. 439--454, 2004.
	
	\bibitem{LDPC_ML}
	D.~{Burshtein} and G.~{Miller}, ``Efficient maximum-likelihood decoding of
	{LDPC} codes over the binary erasure channel,'' \emph{IEEE Transactions on
		Information Theory}, vol.~50, no.~11, pp. 2837--2844, 2004.
	
	\bibitem{shokrollahi2006systems}
	A.~Shokrollahi, S.~Lassen, and R.~Karp, ``Systems and processes for decoding
	chain reaction codes through inactivation,'' US patent 6,856,263, 2005.
	
	\bibitem{lazaro2017inactivation}
	F.~L{\'a}zaro, G.~Liva, and G.~Bauch, ``Inactivation decoding of {LT} and
	{R}aptor codes: Analysis and code design,'' \emph{IEEE Transactions on
		Communications}, vol.~65, no.~10, pp. 4114--4127, 2017.
	
	\bibitem{cocskun2020successive}
	M.~C. Co{\c{s}}kun, J.~Neu, and H.~D. Pfister, ``Successive cancellation
	inactivation decoding for modified {Reed-Muller} and {eBCH} codes,'' in
	\emph{2020 IEEE International Symposium on Information Theory (ISIT)}, 2020,
	pp. 437--442.
	
	\bibitem{algebraic_coding_theory}
	E.~R. Berlekamp, \emph{Algebraic coding theory}, ser. McGraw-Hill series in
	systems science.\hskip 1em plus 0.5em minus 0.4em\relax McGraw-Hill, 1968.
	
	\bibitem{modern_coding_theory}
	T.~Richardson and R.~Urbanke, \emph{Modern Coding Theory}.\hskip 1em plus 0.5em
	minus 0.4em\relax Cambridge University Press, 2008.
	
	\bibitem{EffEncLDPC}
	T.~J. {Richardson} and R.~L. {Urbanke}, ``Efficient encoding of low-density
	parity-check codes,'' \emph{IEEE Transactions on Information Theory},
	vol.~47, no.~2, pp. 638--656, 2001.
	
	\bibitem{arikan2009on}
	E.~Arikan and E.~Telatar, ``On the rate of channel polarization,'' in
	\emph{2009 IEEE International Symposium on Information Theory}, 2009, pp.
	1493--1495.
	
	\bibitem{FastGausElim}
	A.~{Rupp}, J.~{Pelzl}, C.~{Paar}, M.~C. {Mertens}, A.~{Bogdanov}, A.~{Rupp},
	J.~{Pelzl}, C.~{Paar}, M.~C. {Mertens}, and A.~{Bogdanov}, ``A parallel
	hardware architecture for fast {Gaussian} elimination over {GF(2)},'' in
	\emph{2006 14th Annual IEEE Symposium on Field-Programmable Custom Computing
		Machines}, 2006, pp. 237--248.
	
\end{thebibliography}


\end{document}